\newcommand{\be}{\begin{equation}}
\newcommand{\ee}{\end{equation}}
\newcommand{\beq}{\begin{equation*} }
\newcommand{\eeq}{\end{equation*}  }
\newcommand{\bqa}{\begin{eqnarray*} }
\newcommand{\eqa}{\end{eqnarray*}}
\newcommand{\bea}{\begin{eqnarray}}
\newcommand{\eea}{\end{eqnarray}}
  \def \upmns {{\rm{U_{PMNS}}}}
  \def \uosc {{\rm{V_{osc}}}}
  \def \vosc {{\rm{V_{osc}}}}
   \def \upmnsca {{\rm{\cancel{U}_{PMNS}}}}
\def\beq{\begin{equation}}
\def\eeq{\end{equation}}
\def\beqa{\begin{eqnarray}}
\def\eeqa{\end{eqnarray}}
\def\ben{\begin{enumerate}}
\def\een{\end{enumerate}}
\def\bit{\begin{itemize}}
\def\eit{\end{itemize}}
\newcommand{\ket}[1]{|{#1}\rangle}
\newcommand{\N}[1]{\widetilde \nu}
\newcommand{\Nm}[1]{\ket{\widetilde \nu^{(m)}_{#1}}}
\newcommand{\Nf}[1]{\ket{\widetilde\nu^{(f)}_{#1}}}
\newcommand{\nuf}[1]{\ket{\nu^{(f)}_{#1}}}
\newcommand{\num}[1]{\ket{\nu^{(m)}_{#1}}}
 \newtheorem{theorem}{Theorem}
 \newtheorem{prop}{Proposition}
 \newtheorem{defi}{Definition} 
 \newtheorem{cor}{Corollary}
 \newtheorem{lem}{Lemma}
\begin{document}

 
 
 \title{
Neutrino mixing, interval matrices and singular values
 }
 
\author{Krzysztof Bielas}
\affiliation{Institute of Physics, University of Silesia,  Katowice, Poland}
\affiliation{Copernicus Center for Interdisciplinary Studies, Szczepa\'{n}ska 1/5, 31-011 Krak\'{o}w, Poland}
\author{Wojciech Flieger}
\author{Janusz Gluza}
\affiliation{Institute of Physics, University of Silesia, Uniwersytecka 4, 40-007 Katowice, Poland}
\author{Marek Gluza}
\affiliation{Dahlem Center for Complex Quantum Systems, Freie Universit{\"a}t Berlin, 14195 Berlin, Germany}
 
\date{\today}
\begin{abstract}
We study the properties of singular values of mixing matrices embedded within an experimentally determined interval matrix. We argue that any physically admissible mixing matrix needs to have the property of being a contraction. This condition constrains the interval matrix, by imposing correlations on its elements and leaving behind only physical mixings that may unveil signs of new physics  in terms of extra neutrino species. We propose a description of the admissible three-dimensional mixing space as a convex hull over experimentally determined unitary mixing matrices parametrized by Euler angles, which allows us to select either unitary or nonunitary mixing matrices. The unitarity-breaking cases are  found through singular values and we construct unitary extensions, yielding a complete theory of minimal dimensionality larger than three through the theory of unitary matrix dilations.
We discuss further applications to the quark sector.

\end{abstract}

\maketitle 
\allowdisplaybreaks

\section{Introduction}

\allowdisplaybreaks
Studies of neutrinos are at the frontier of contemporary research in particle physics.
These fundamental particles crucially influence processes occurring inside the Sun \cite{Bahcall:1976zz}, stars, and supernovae \cite{Giunti:2007ry,Mohapatra:2004}. 
In order to learn about their properties, there are dozens of short- and long-baseline neutrino oscillation experiments studying, e.g., their appearance or disappearance \cite{unb}. 
Thanks to them, we know that at least two out of three neutrinos are massive, though their masses are extremely tiny, at most at the electronvolt level, $m_{\nu}\sim {\cal{O}}(1)$ eV \cite{Patrignani:2016xqp}.
Gathering this information was a highly nontrivial task as neutrino experiments involve the challenge of low event statistics.
Among unsolved and important problems in neutrino physics remains the issue of the total number of neutrino species in nature. 
Do we really have only the three electron, muon, and tau neutrino \emph{flavors} as prescribed by the neutrino theory of the Standard Model (SM) \cite{Giunti:2007ry,Mohapatra:2004}? 
This knowledge is
 of paramount importance for progress in understanding particle physics and theory beyond the Standard Model (BSM), but also in the astrophysics and cosmology of the big bang, leptogenesis and baryogenesis, and dark matter  \cite{Giunti:2007ry,Mohapatra:2004,Lesgourgues-Mangano-Miele-Pastor-2013,Rasmussen:2017ert}. 
 The point is that additional neutrino species are likely massive, affecting the dynamics of many processes and systems, including the Universe as whole. Their existence is theoretically appealing as they could provide an explanation of the smallness of masses of known neutrinos, e.g. by the celebrated see-saw mechanism \cite{Minkowski:1977sc,Yanagida:1979as,GellMann:1980vs,Mohapatra:1979ia}.
 There is currently no compelling experimental evidence for \emph{extra} neutrino states, despite  direct collider \cite{Achard:2001qv,Blondel:2014bra,Golling:2016gvc,Oh:2016ead}
  and indirect electroweak precision studies \cite{Nardi:1991rg,Nardi:1994iv,Kniehl:1994vk,Illana:2000ic,delAguila:2008pw,Fernandez-Martinez:2016lgt} providing bounds on their masses and couplings. 
 As a dim clue for their presence one may  consider an outcome of  the Large Electron-Positron collider (LEP) studies where the central value for the effective number of light neutrinos $N_\nu$ was determined by analyzing around 20 million  $Z$-boson decays, yielding $N_\nu=2.9840 \pm 0.0082$ \cite{ALEPH:2005ab,Novikov:1999af}. 
In fact, a natural extension of the SM by right-handed, sterile  neutrinos leads to an $N_\nu$  value less than three \cite{Jarlskog:1990kt}.
 There are also intensive studies concerning sterile eV-scale neutrinos, connected with oscillation phenomena. In the Liquid Scintillator Neutrino Detector (LSND) experiment an excess of electron antineutrinos appearing in a mostly muon antineutrino beam at the $3.8\sigma$ level was observed while the SM would predict no significant effect \cite{Aguilar:2001ty}. To solve this puzzle conclusively new experiments are under way \cite{Gariazzo:2017fdh}.  For recent MiniBoone results, see \cite{Aguilar-Arevalo:2018gpe}.
 The question of whether sterile neutrinos exist is being researched by ongoing studies performing global analyses of neutrino oscillation data \cite{Gariazzo:2017fdh, Dentler:2018sju}.
\\
  \indent
In the description of phenomena like neutrino oscillations,  \emph{mixing matrices} are the central objects.
In the SM scenario with three neutrino species, the mixing matrix is known as the Pontecorvo-Maki-Nakagawa-Sakata matrix (PMNS) \cite{Pontecorvo:1957qd,Maki:1962mu}. It is three dimensional and unitary, and it can be parametrized by Euler angles.
When the evaluation of experiments is performed, the hope from the BSM perspective is that an inconsistency in data analysis -- in particular, \emph{violation of unitarity} of the mixing matrix -- would give a hint for the existence of new neutrino states.
In this work, assisted by concepts and theorems taken from matrix theory and convex analysis \cite{books/daglib/0019187,Allen06matrixdilations,Halmos:1950,Sz.-Nagy:1953,magaril2003convex,Krantz2014} we describe  an elegant approach to mixing phenomena capable of capturing SM and BSM within the same framework.

At the foundation of our studies lies the analysis of singular values of mixing matrices in the form of an interval matrix  which gathers knowledge of experimental errors. 
Firstly, we characterize physical mixing matrices by looking at the largest singular value of a given mixing matrix (which equals the operator norm) and derive on physical grounds that it must be less than or equal to unity, a matrix property known as \emph{contraction}. 
Using the  notion of contractions we consistently stay within the region of physical states with properly correlated mixing elements. 
Secondly, we study unitarity violation as witnessed by any of the singular values being strictly less than one, which has a direct physical consequence and means that the three SM neutrinos mix with unknown species.
Therefore identifying such a situation is a smoking gun signal for the existence of additional neutrinos.
Finally, we employ the theory of \emph{unitary matrix dilations} in order to find a unitary extension of any three-dimensional mixing matrix which is physically admissible yet not unitary. 
We apply this method to an example from experimental data and discuss how this approach can be used to find a minimal number of necessary extra neutrino states in a BSM scenario, leading to a complete theory based on a higher-dimensional unitary mixing matrix.

\section{Setting}
\label{sec:setting}
We begin our discussion with the SM scenario of three  weak flavor -- electron, muon, tau -- neutrinos. 
In this framework, mixing of neutrinos is modeled by single-particle asymptotically free scattering states with a given momentum and spin which are emitted in a fixed \emph{flavor} state $\nuf e, \nuf \mu$, or $\nuf \tau$ and then mix coherently between different \emph{mass} states $\num 1, \num 2$, and $\num 3$, defined by \cite{Giunti:2007ry}
\begin{equation}
\nuf \alpha =\sum_{i=1}^3
\left( \upmns  \right)_{\alpha i} 
{\num i}\;.
\label{class}
\end{equation}
The PMNS \emph{mixing matrix} $\upmns$ is unitary and can be parametrized by   \cite{Maki:1962mu, Kobayashi:1973fv, Bilenky:1987ty}
{\small{\begin{eqnarray}
  \label{eq:U3m}
 && \upmns = \nonumber 
\\ && \hspace*{-.7cm}  \begin{pmatrix}
    1 & 0 & 0 \\
    0 & c_{23}  & {s_{23}} \\
    0 & -s_{23} & {c_{23}}
  \end{pmatrix}
  \begin{pmatrix}
    c_{13} & 0 & s_{13} e^{-i\delta_\text{}} \\
    0 & 1 & 0 \\
    -s_{13} e^{i\delta_\text{}} & 0 & c_{13}
  \end{pmatrix}
  \begin{pmatrix}
    c_{12} & s_{12} & 0 \\
    -s_{12} & c_{12} & 0 \\
    0 & 0 & 1
  \end{pmatrix}, \nonumber \\
\label{upmns}
\end{eqnarray}
}}
\!\!where we denote $c_{ij} \equiv \cos( \theta_{ij})$, $s_{ij} \equiv \sin(\theta_{ij})$,
and the Euler rotation angles $\theta_{ij}$ can be taken without loss
of generality from the first quadrant, $\theta_{ij} \in [0, \pi/2]$, and the $CP$ phase $\delta \in [0, 2\pi]$.
The current global $3\nu$ oscillation analysis \cite{Esteban:2016qun,nufit} gives at $3\sigma$ C.L. 
\begin{eqnarray}
 \theta_{12} \in [31.38^\circ, 35.99^\circ ]\ , \quad 
 \theta_{23} \in [38.4^\circ, 53.0^\circ ]\ , \nonumber\\
 \theta_{13}  \in  [7.99^\circ, 8.91^\circ ]\ ,  \quad\quad\text{and}\quad\quad 
  \delta  \in   [0, 2 \pi]\;.
 \label{exp3s}
 \end{eqnarray}
These results are  independent of the normal or inverse mass hierarchies \cite{Smirnov:2013cqa,Olive:2016xmw}, which is not of first concern in this work.
The exact ranges can differ also slightly in other analyses \cite{Capozzi:2016rtj, Forero:2014bxa}.

In the above,  it was assumed that mixing among light and active neutrino states is complete -- hence the neutrino mixing matrix is unitary. 
However, the situation can be more complicated. 
In a BSM scenario other neutrino mass and flavor  states  can be present that we denote by $\Nm j$ and $\Nf j$ for $j=1,\ldots, n_R$, respectively.
In this scenario mixing between an extended set of neutrino mass states $\{\num \alpha,\Nm \beta\}$ with flavor states $\{\nuf \alpha,\Nf \beta\}$ is described by
\begin{align}
\begin{pmatrix} { \nuf \alpha} \\ 
 \Nf \beta\end{pmatrix}  &=
 \begin{pmatrix} {{ V }} & V_{lh} \\  V_{hl} & V_{hh}
  \end{pmatrix} 
\begin{pmatrix} {\num \alpha } \\ 
\Nm \beta\end{pmatrix} 
\equiv U  \begin{pmatrix} {{ \num \alpha }} \\ \Nm \beta\end{pmatrix}\;.
\label{ugen}
\end{align}
{Such block structures of the unitary $U$ are present in many neutrino mass theories. Note that (\ref{ugen}) effectively implements an assumption of unitary mixing restricted to the level of single-particle states only, e.g., neglecting interaction effects which are expected to be weak. 
Indices "$l$" and "$h$" in \eqref{ugen} stand for "light" and "heavy" as usually we expect extra neutrino species to be much heavier than known neutrinos; cf. the see-saw mechanism  \cite{Minkowski:1977sc,Yanagida:1979as,GellMann:1980vs,Mohapatra:1979ia}. 
However, it does not have to be the case: they can also include light sterile neutrinos, which effectively decouple in weak interactions, but are light enough to be in quantum superposition with three SM active neutrino states and to take part in the neutrino oscillation phenomenon \cite{Capozzi:2016vac}. 
} 

The observable part of the above is the transformation from mass $\num \alpha, \Nm \beta$ to SM flavor $\nuf \alpha$ states and reads 
\begin{eqnarray}
\nuf \alpha &=& 
\sum_{i=1}^3\underbrace{\left( V  \right)_{\alpha i} {\num i}}_{\rm SM \;part}
+ 
\sum_{j=1}^{n_R}\underbrace{
{ \left(V_{lh}  \right)}_{\alpha j} \Nm j}
_{\rm BSM \; part}\;. \label{gen}
\end{eqnarray}  
If $V$ is not unitary then  there necessarily is a light-heavy neutrino "coupling" and the mixing between sectors is nontrivial $V_{lh} \neq 0 \neq V_{hl}$. 
Without extra states $\Nm ~$, we end up with the situation described in \eqref{class}-\eqref{exp3s}, $V \to \upmns$; i.e., there
 are either no BSM neutrinos or they are decoupled on the level of the joint mixing matrix.

\section{Physically admissible mixing matrices are contractions}
In this section we will make precise the notion {of physically admissible mixing matrices}.
To this end, we will study the \emph{singular values} $\sigma_i(V)$ of a given matrix $V$, which are equal to the positive square roots of the eigenvalues $\lambda_i$ of the matrix $VV^\dagger$; i.e., $\sigma_{i}(V)\equiv \sqrt{\lambda_{i}(VV^{\dag})}$ for $i=1,2,3$ \cite{books/daglib/0019187}.
Singular values generalize eigenvalues to all kinds of matrices, e.g., those not diagonalizable by a similarity transformation or even rectangular ones, and have useful properties that in particular can be related to the operator norm $\|V\|\equiv \max_i \sigma_i(V)$. 
In the SM scenario one would only consider unitary matrices; hence $\|V\|=1$ and all singular values are equal (see Appendix A).  
In this work, we are also interested in constraints on $V$ as a principal \textit{submatrix} of a unitary $U$ realizing some BSM scenario \eqref{ugen}. 
For any such matrix $V$, the operator norm is bounded by unity
\begin{equation}
 \Vert V \Vert \le 1\;,
  \label{eq:contr}
 \end{equation}
a matrix property known as \textit{contraction}.
In other words, if $U$ is unitary, then $\Vert U\Vert=1$ and for any submatrix $V$ of $U$ it holds that $\Vert V\Vert\leq \Vert U\Vert=1$; see Appendix A for a simple proof.
Observe that $\Vert V \Vert=1$ is not sufficient for $V$ to be unitary and any significant deviation of \emph{any} singular value from unity $\sigma_i(V)< 1$  signals BSM physics.
  Physically, measuring a mixing matrix with nonunit singular values means that a given neutrino mixes with other ones that are not being observed and hence the unitarity loss.
 Note that any observable  mixing matrix must be a contraction both in the SM and BSM scenario. 
Moreover, singular values are suitable quantities while working with experimental data, since they are stable under the addition of perturbing error matrices and the resulting errors of the operator norm can be upper bounded, while the stability of eigenvalues can be in general very weak, e.g., violating Lipschitz continuity \cite{books/daglib/0019187}. It can be achieved only if matrices after the perturbation remain normal \cite{books/daglib/0019187}, a condition that obviously cannot be fulfilled generally when considering experimental data.

In this work, we show how the contraction property allows us to distinguish physically admissible mixing matrices. 
Namely, recall that if $V$ is a submatrix of some larger unitary mixing matrix $U$, then it must be a contraction.
Conversely, as we will show presently any $V$ which is a contraction can be completed into a unitary mixing matrix $U$ whose minimal dimension can be read off from the singular values of $V$.
Hence we establish the following characterization useful for data analysis, allowing us to decide whether a candidate mixing matrix $V$ is physically admissible.
\begin{defi}[Physically admissible mixing matrix]
A matrix $V$ is a physically admissible mixing matrix if and only if it is a contraction, i.e., $\|V\|\le 1$.
\end{defi}

\section{Interval matrices, unitarity violation, and contractions}
Though the matrix $\upmns$ is unitary, information on BSM physics can be  hidden there. 
To see this one should ask what would be the result of a fit assuming unitarity in the case that the mixing was actually nonunitary?
In a BSM scenario  \eqref{gen}, a unitary fit to (\ref{upmns}) would hide the BSM physics in the error bars and hence the experimental Euler angle ranges may reflect not only measurement inaccuracies but also the hypothetical nonunitarity of the underlying mixing.
For similar reasons, the search for BSM via unitary triangle analysis \cite{Jarlskog:1988ii,Xing:2015wzz} is based on PMNS data. So far experimental analysis is not precise enough to confirm or exclude definitively BSM \cite{Patrignani:2016xqp}.

In order to find the nonunitary cases, we discretize the experimentally allowed ranges in \eqref{exp3s}, calculate the corresponding $\upmns$ matrices using \eqref{upmns}, and collect the extreme values of each matrix element that occurred into an interval matrix $\vosc$ \cite{Esteban:2016qun}
\vspace{-10pt}

{\small{\begin{eqnarray}
  \label{ranges}
 && \hspace*{-.5cm}\upmns \to \vosc = \\
&& \begin{pmatrix}
    0.799 \div 0.845 & 
    0.514 \div 0.582 &
    0.139 \div 0.155
    \\
    -0.538 \div -0.408 &
    0.414 \div 0.624 &
    0.615 \div 0.791
    \\
    0.22 \div 0.402 &
    -0.73 \div -0.567 &
    0.595 \div 0.776
    \end{pmatrix}\;. \nonumber
\end{eqnarray}}}
We will write $V\in \vosc$ whenever all entries of $V$ lie in the intervals of \eqref{ranges}.
This interval matrix is real as we have fixed for simplicity in \eqref{exp3s} the CP-phase $\delta$ to be zero, but our analysis can also be applied to complex mixing matrices.
The exact values in this interval matrix can differ slightly depending on global fits and considered approaches \cite{Esteban:2016qun,Capozzi:2016rtj, Forero:2014bxa,Parke:2015goa}.
Our construction of the  interval matrix is based on \cite{Esteban:2016qun,nufit}, {where the interval matrix was obtained in the same way, i.e., by looking at the extreme values of the entries of the mixing matrices $V_{ij}$ for all possible Euler angles consistent with the oscillation data. As an alternative, we will also refine this procedure by looking at convex combinations of $\upmns$ matrices which should be even closer to the data by retaining correlations between matrix elements.
In particular, this allows us to construct candidate BSM matrices as toy examples to study various methods related on mixing matrices close to the data.

It is important to observe that it is not necessary to construct (\ref{ranges}) from $\upmns$. In principle, such an interval matrix could be derived directly from experimental data. 
In the neutrino sector, direct experimental access to each of the entries of the $3\times3$ matrix individually is presently not possible and experimental analyses based on $\upmns$ are a natural choice. 
If this were possible, then the interval matrix would become a useful way of bringing various experimental findings together. Indeed, approaches to oscillation analysis independent of PMNS are possible \cite{Parke:2015goa}.

Typically, unitarity violation of a neutrino mixing matrix $\upmnsca$, where the slash emphasizes unitarity-breaking, can be parametrized in various ways
\cite{Bekman:2002zk,FernandezMartinez:2007ms,Antusch:2006vwa,delAguila:2008pw,Goswami:2008mi,Rodejohann:2009cq,Antusch:2009pm,Fong:2016yyh,Blennow:2016jkn,Dutta:2016czj}.
A frequent approach is based on the polar decomposition, introducing a unitary $Q$ and a Hermitian matrix $\eta$, to write \cite{FernandezMartinez:2007ms,Antusch:2006vwa}
\begin{equation}
{\upmnsca} {{{\rm{\cancel{U}^{\dagger}_{PMNS}}}}}= [(I+\eta)Q] [(I+\eta)Q]^\dagger \equiv I+\varepsilon \;.\label{upmnseps}
\end{equation} 
Here, $\eta_{ij}$ and $\varepsilon_{ij}$ "measure" how far from unitary the PMNS matrix can be. 
There is also another commonly used parametrization known as the
$\alpha$ parametrization \cite{Xing:2007zj,Xing:2011ur}:
\begin{equation}
{\upmnsca} 
\equiv (I-\alpha)U. \label{upmnsalpha}
\end{equation}
Here $\alpha$ is a lower triangular matrix and $U$ is unitary. Such a triangular structure of $\alpha$ is especially convenient for singular values analysis \cite{porwit}. 
This parametrization is often used in oscillation analysis, e.g., \cite{Escrihuela:2015wra,Blennow:2016jkn,Escrihuela:2016ube}. 
We discuss both $\alpha$ and $\eta$ parametrizations in the wider context of matrix analysis in Appendix~E.
 
Observe that decompositions given only by \eqref{upmnseps} or \eqref{upmnsalpha} need some extra conditions to produce contractions exclusively, as in general it can happen that although a given matrix  lies within experimental limits and is of the form given by Eq. \eqref{upmnseps} or \eqref{upmnsalpha}, it will not be a contraction (for a proof, see Proposition \ref{prop1} below). 
In particular, such a condition can be provided by embedding a three-dimensional mixing matrix into a larger unitary one. 
Accordingly, it is standard in the  neutrino unitarity-breaking literature to take one of the approaches \eqref{upmnseps} or \eqref{upmnsalpha} \textit{together} with that embedding as the precise definition of the so-called $\alpha$ or $\eta$ parametrization (cf. \cite{Antusch:2006vwa, FernandezMartinez:2007ms, Xing:2007zj, Antusch:2009pm, Blennow:2011vn, Xing:2011ur, Escrihuela:2015wra, Blennow:2016jkn,Agostinho:2017wfs}). 
Therefore, by combination with such additional conditions, the contraction property of the mixing matrix is secured; see Appendix E for further discussion.
However, it is common to present the data of such analyses in the form of an interval
matrix, where the correlations between elements are lost. 
If one would like to consider a mixing matrix $\eta$ taken from such an interval matrix as a point of departure, it is  profitable to find a condition on that particular $\eta$ matrix to be physical, i.e., to give rise to a physically admissible mixing matrix.
For this, Proposition \ref{prop2} characterizes a particular sufficient condition securing any candidate mixing matrix taken from an interval matrix to be physically admissible.
A similar argument can be proven for the $\alpha$ parametrization  \eqref{upmnsalpha} and, in fact, it has been shown in \cite{Blennow:2016jkn} that (\ref{upmnseps}) and (\ref{upmnsalpha}) are equivalent. Therefore only the $\eta$ case is regarded in the following.

\begin{prop}
\label{prop1}
Let $\epsilon_{ij}>0$ and consider a set of possibly nonunitary matrices 
\begin{equation}
\cancel{\Theta} = \{\ V = (1+\eta) U \ \mathrm{:}\ \eta=\eta^\dagger, |\eta_{ij}|\le \epsilon_{ij},\, UU^\dagger= I\ \}\ . \label{eqprop1}\end{equation}
Then $\cancel{\Theta}$ contains matrices which are not contractions and hence are not valid mixing matrices, i.e., are unphysical.
\end{prop}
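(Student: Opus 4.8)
The plan is to exhibit an explicit counterexample — a matrix of the form $V=(1+\eta)U$ with $\eta$ Hermitian and $\|\eta\|_{\text{entrywise}}$ small, yet $\|V\|>1$. Since Proposition~\ref{prop1} is an existence claim ("$\cancel{\Theta}$ contains matrices which are not contractions"), the cleanest route is constructive rather than abstract. The key observation is that adding a Hermitian perturbation $\eta$ to the identity and then multiplying by a unitary $U$ does \emph{not} in general preserve the contraction property, because $(1+\eta)$ can have operator norm strictly greater than one whenever $\eta$ has any positive eigenvalue.

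First I would reduce the problem to understanding $\|(1+\eta)U\|$. Since $U$ is unitary, multiplication by $U$ on the right is an isometry in the operator norm, so $\|(1+\eta)U\|=\|1+\eta\|$. Thus the entire question collapses to whether $\|1+\eta\|$ can exceed one while all entries satisfy $|\eta_{ij}|\le\epsilon_{ij}$. Because $\eta$ is Hermitian, $\|1+\eta\|=\max_i|1+\lambda_i(\eta)|$, where $\lambda_i(\eta)$ are the real eigenvalues of $\eta$. Hence it suffices to produce a Hermitian $\eta$ respecting the entrywise bounds whose largest eigenvalue is strictly positive.

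Next I would simply take $\eta$ to be the diagonal matrix with a single positive entry, say $\eta=\mathrm{diag}(\epsilon_{11},0,0)$ (or any diagonal matrix with one entry equal to its allowed bound and the rest zero). This trivially satisfies $\eta=\eta^\dagger$ and $|\eta_{ij}|\le\epsilon_{ij}$ for every $i,j$. Its eigenvalues are $\epsilon_{11},0,0$, so $\|1+\eta\|=1+\epsilon_{11}>1$ since $\epsilon_{11}>0$ by hypothesis. Choosing any unitary $U$ — for instance $U=I$ — then yields $V=(1+\eta)U$ with $\|V\|=1+\epsilon_{11}>1$, so $V\in\cancel{\Theta}$ but $V$ is not a contraction. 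This is the desired counterexample.

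\textbf{There is no real obstacle here}; the content of the proposition is precisely that the naive $\eta$-parametrization, taken on its own without the embedding condition discussed in the surrounding text, fails to enforce contraction. The only point requiring a moment's care is the sign convention: the perturbation must push an eigenvalue \emph{above} zero rather than below, which is why a positive diagonal entry (rather than a negative one) is chosen — a negative $\eta$ would instead give $\|1+\eta\|<1$ and \emph{would} be a contraction, so the construction genuinely relies on the freedom to take $\eta_{ij}>0$. I would state the argument for a diagonal $\eta$ to keep the eigenvalue computation transparent, and remark that the same phenomenon persists for generic (non-diagonal) $\eta$, since the set of Hermitian matrices with a positive eigenvalue and small entries is open and nonempty.
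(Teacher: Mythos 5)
Your proof is correct. It rests on exactly the same two facts as the paper's proof --- unitary invariance of the operator norm, giving $\|(1+\eta)U\|=\|I+\eta\|$, and the spectral evaluation $\|I+\eta\|=\max_i|1+\lambda_i(\eta)|$ for Hermitian $\eta$, so that any positive eigenvalue forces the norm above one --- but you deploy them differently. You exhibit a single explicit witness, $\eta=\mathrm{diag}(\epsilon_{11},0,0)$ with $U=I$, which settles the literal existence claim in one line and has the virtue of making the failure mode completely concrete. The paper instead argues for an \emph{arbitrary} nonzero admissible $\eta$: if $\eta$ has a positive eigenvalue one is done immediately, and if $\eta\preceq 0$ one passes to $\tilde\eta=-\eta$, which satisfies the same entrywise constraints (they bound only $|\eta_{ij}|$) and does have a positive eigenvalue. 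This sign-flip dichotomy proves something strictly stronger than the proposition states: for \emph{every} nonzero $\eta$ in the constraint set, at least one of $(1\pm\eta)U$ fails to be a contraction, so non-contractions pervade $\cancel{\Theta}$ rather than merely occurring in it. That stronger reading is what the paper leans on afterwards (its $\eta_{max}$ example checks both $I-\eta_{max}$ and $I+\eta_{max}$, and the remark that tighter experimental limits ``do not change the situation'' is precisely the dichotomy at work, since shrinking the $\epsilon_{ij}$ does not remove the sign freedom). Your closing observation --- that Hermitian matrices with a positive eigenvalue and small entries form an open nonempty set --- gestures at the same genericity, so nothing essential is missing for the statement as posed; you simply trade the paper's universality over $\eta$ for brevity.
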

\begin{proof}
Let $V=(1+\eta) U\in \cancel{\Theta}$ satisfy $\eta\neq0$.
If $\eta$ has a positive eigenvalue $\lambda^+>0$, then we use that $I+\eta$ is diagonalizable and obtain $\|V\|=\|I+\eta\|  \ge 1+\lambda^+>1$ using unitary invariance of the operator norm, and the lower bound comes from the fact that there may be other eigenvalues that are still larger than $\lambda^+$.
If $\eta\preceq 0$, i.e., it has no positive eigenvalues, then we observe that $\tilde \eta = -\eta$ has at least one positive eigenvalue and the constraints of $\cancel{\Theta}$ are satisfied.
Thus we find for $\tilde V=(1+\tilde \eta)U\in\cancel{\Theta}$ that $\|\tilde V\|>1$.
\end{proof}

As an example, let us consider an interval matrix for $\eta$ (see \cite{FernandezMartinez:2007ms}) and its particular elements:
\begin{equation} \label{etamax}
\eta_{max}= 
\left(
\begin{array}{ccc}
0.0054 & 0.000034 & 0.0079 \\
0.000034 & 0.0049 & 0.005 \\
0.0079 & 0.005 & 0.0049
\end{array}
\right)\ .
\end{equation}
The subscript "max" indicates that we have chosen elements of $\eta$ to have largest absolute values given the constraints of the respective interval matrix.

This matrix is Hermitian, as given in (\ref{eqprop1}). As follows from Proposition~\ref{prop1}, more stringent limits, e.g., \cite{Agostinho:2017wfs,Antusch:2014woa}, 
do not change the situation.
Due to the fact that we bound only absolute values, we consider the following two cases:
\begin{equation}
\begin{split}
a) \ I-\eta_{max}, \\
b) \ I+\eta_{max}.
\end{split}
\end{equation}
Performing a singular value decomposition\cite{books/daglib/0019187}, we obtain the following singular values:
\begin{equation}
\begin{split}
a) \ \lbrace 1.00426, 0.995, 0.986 \rbrace, \\
b) \ \lbrace 1.01445, 1.005, 0.996 \rbrace.
\end{split}
\end{equation}
We can see that both spectra, which correspond to the singular values of the matrices $V=(I\pm\eta_{max})U$, contain eigenvalues larger than one, which means that mixing matrices $V$ constructed using these particular matrices are not contractions. 
We consider a general form of the $\eta$ parametrization, where $U \equiv Q$ in (\ref{upmnseps}) is an arbitrary unitary matrix. 
Observe that, regarding our analysis of $\eta_{max}$ taken from Eq.(11), there is a subtle detail. To check whether a matrix $V$ is a contraction, we do not use the unitary matrix $U$ at all.
This follows from the nature of the $\eta$ parametrization which is in fact a polar decomposition. 
The contraction property is based on the operator norm, which is unitarily invariant, which means that only the polar matrix contributes and the unitary part (by definition) does not change the norm.
Thus for the analysis of singular values that we have done, the unitary part is irrelevant.
It should be no surprise that such a particular element $\eta_{max}$ could be unphysical in spite of its Hermiticity, since the very construction of interval matrices destroys correlations between elements, as discussed above. Nevertheless, we can restrict ourselves to physical matrices which are contractions by the following proposition.

\begin{prop}
\label{prop2}
If all $\epsilon_{ij} \le \epsilon$ are sufficiently small, then restricting to negative semidefinite perturbations $\eta\preceq 0$ yields exclusively physically admissible mixing matrices: 
\begin{equation}
\label{eq:theta_good}
\Theta = \{\ V = (1+\eta) U\ \mathrm{:} \ \eta=\eta^{\dag}, \eta\preceq 0, \|\eta\|\le1, \, UU^\dagger= I\ \} \ .
\end{equation}
For any $V=(1+\eta) U\in \Theta$,
we have that the norm of $V$ can be obtained from the largest eigenvalue of the diagonalizable matrix $1+\eta$.
\end{prop}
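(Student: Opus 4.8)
The plan is to reduce both assertions to the location of the spectrum of the Hermitian matrix $1+\eta$ inside the interval $[0,1]$, using unitary invariance to dispose of the factor $U$. First I would invoke that the operator norm is unitarily invariant, so that for any $V=(1+\eta)U\in\Theta$ one has $\|V\|=\|(1+\eta)U\|=\|1+\eta\|$; the unitary factor $U$ drops out completely and is irrelevant to the norm, exactly as already observed for the $\eta$ parametrization in the discussion preceding the proposition. This immediately isolates $1+\eta$ as the only object that matters.

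Next I would exploit Hermiticity. Since $\eta=\eta^\dagger$, the matrix $1+\eta$ is not merely diagonalizable but \emph{unitarily} diagonalizable (normal), and it is precisely this stronger property—rather than bare diagonalizability, which for a general matrix would not suffice—that guarantees $\|1+\eta\|=\max_i|\lambda_i(1+\eta)|$. Writing $\mu_i$ for the eigenvalues of $\eta$, the eigenvalues of $1+\eta$ are $1+\mu_i$. The two defining constraints of $\Theta$ now pin these down: $\eta\preceq0$ forces $\mu_i\le0$, while $\|\eta\|\le1$ together with $\eta\preceq0$ forces $\mu_i\ge-1$; equivalently $-I\preceq\eta\preceq0$, whence $0\preceq 1+\eta\preceq I$ and $0\le 1+\mu_i\le1$ for every $i$. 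Because all eigenvalues of $1+\eta$ are nonnegative, the largest absolute value coincides with the largest eigenvalue, so $\|1+\eta\|=\lambda_{\max}(1+\eta)$. This is the second claim, that $\|V\|$ is read off from the largest eigenvalue of the diagonalizable matrix $1+\eta$; and since that eigenvalue is at most $1$, we obtain simultaneously $\|V\|\le1$, i.e. every $V\in\Theta$ is a contraction, hence physically admissible in the sense of the definition above.

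I expect the main point to get right—rather than a genuine obstacle—to be the bookkeeping of where the two hypotheses act, which is also the sharp contrast with Proposition \ref{prop1}. The condition $\eta\preceq0$ removes the positive eigenvalues that in Proposition \ref{prop1} push $\|1+\eta\|$ above unity, while $\|\eta\|\le1$ prevents any eigenvalue of $\eta$ from dropping below $-1$, which would otherwise yield an eigenvalue of $1+\eta$ of absolute value exceeding $1$ and again break the contraction bound through the maximum-modulus rather than the maximum-eigenvalue. Dropping either hypothesis invalidates the conclusion, so both are genuinely used; once one recognizes that $-I\preceq\eta\preceq0$ confines the whole spectrum of $1+\eta$ to $[0,1]$, the argument is short.

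Finally I would tie the hypothesis that the $\epsilon_{ij}\le\epsilon$ be sufficiently small to the constraint $\|\eta\|\le1$ built into $\Theta$. For an $\eta$ drawn from an interval matrix with $|\eta_{ij}|\le\epsilon_{ij}\le\epsilon$, a crude estimate such as $\|\eta\|\le\|\eta\|_F=\sqrt{\sum_{ij}|\eta_{ij}|^2}\le 3\epsilon$ (or a Gershgorin bound $\|\eta\|\le\max_i\sum_j|\eta_{ij}|\le3\epsilon$ in the $3\times3$ case) shows that $\|\eta\|\le1$ holds automatically once $\epsilon\le1/3$. Thus for sufficiently small $\epsilon$ the negative semidefinite part of the interval matrix is contained in $\Theta$, and the argument above applies to every such $\eta$; in the physically relevant regime, where the entries of $\eta_{max}$ in \eqref{etamax} are of order $10^{-2}$ or smaller, this smallness condition is comfortably met.
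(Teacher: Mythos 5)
Your proof is correct and follows essentially the same route as the paper's: unitary invariance reduces $\|V\|$ to $\|I+\eta\|$, simultaneous diagonalization together with the nonpositivity of the eigenvalues of $\eta$ confines the spectrum of $I+\eta$ to $[0,1]$ so that $\|V\|=1+\max_i\lambda_i(\eta)\le 1$, and an entrywise estimate ties the smallness hypothesis to $\|\eta\|\le 1$ (the paper states $\epsilon<\tfrac1n$ suffices, you give the equivalent Frobenius/Gershgorin bound with $\epsilon\le\tfrac13$ for $n=3$). Your observation that it is normality of $1+\eta$, rather than bare diagonalizability, that equates the operator norm with the largest eigenvalue is a welcome sharpening of the paper's wording, but the argument is the same.
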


\begin{proof}
It suffices that $\epsilon<\tfrac 1 n $, where $n$ is the dimension of the matrix $\eta$, and we will find  $\|\eta\|<1$.
As the identity $I$ and $\eta$ are simultaneously diagonalizable and all eigenvalues $\lambda_i(\eta)$ of $\eta$ are nonpositive, we find that $\|V\| = \|I+\eta\|=1+\max_i \lambda_i(\eta)\le 1$, so all $V\in \Theta$ are contractions and thus are admissible mixing matrices.
\end{proof}

We sum up this section in the following way. 
As mentioned already, there are parametrizations which allow us to generate $\upmns$-like $3\times 3$ matrices which  by \emph{construction} are contractions \cite{Antusch:2006vwa, FernandezMartinez:2007ms, Xing:2007zj, Antusch:2009pm, Blennow:2011vn, Xing:2011ur, Escrihuela:2015wra, Blennow:2016jkn,Agostinho:2017wfs}, respecting the present experimental bounds. If not secured directly, the condition of negative semidefinite perturbation (\ref{eq:theta_good}) can be used to ensure that the considered mixing matrices are physically admissible when working particularly with \eqref{upmnseps}. 
In general, it is numerically efficient to check directly the contraction property  \eqref{eq:contr} of examined mixing matrices for any parametrization.

\section{Physical mixing space from Euler angles }
We proceed by characterizing physical mixing matrices consistent with the experimental data.
Firstly, let us note that the set of all (unrestricted) contractions $B=\{ V\in M_{3\times 3}(\mathbb{C}) \;|\; \|V\|\le 1\}$ is a unit ball in operator norm and hence is convex.
This abstract property allows us to describe $B$ in terms of its extreme points which in the case of contractions are unitary matrices $U_{3 \times 3}$ \cite{Stoer:1964:CLU:2715875.2716107}.
In fact, we can easily find that a convex combination  $V=  \sum_{i=1}^{M} \alpha_{i}U_{i}$ of unitary matrices $U_i$ with  $\alpha_{1},...,\alpha_{M} \geq 0$ and $ \sum_{i=1}^{M}\alpha_{i}=1$ is a contraction because $\|V\|\le \sum_{i=1}^{M} \alpha_i \|U_i\|=1$ by the triangle inequality.
For such combinations, restricted to experimentally determined $\upmns$ unitary matrices, we have $V\in \uosc$ because the interval matrix is constructed from extreme values of $U_i$ and convex combinations cannot change these bounds. 
Conversely, when $V\in \uosc$ is a contraction but cannot be written as a convex combination of unitary matrices within allowed angle ranges \eqref{exp3s}, then it means that the construction of $\uosc$ through extreme matrix elements simply introduces discrepancies with the data by disregarding correlations between matrix elements.
Therefore, the set of all finite convex combinations of PMNS matrices given by
\begin{eqnarray}
\Omega 
&\coloneqq&\lbrace  \sum_{i=1}^{M} \alpha_{i}U_{i} \mid U_{i} \in U_{3 \times 3}, 
\alpha_{1},...,\alpha_{M} \geq 0, \sum_{i=1}^{M}\alpha_{i}=1, \nonumber \\
& & \quad \theta_{12}, \theta_{13}, \theta_{23} \ \begin{rm}and\end{rm} \ \delta \ \begin{rm}given \ by\end{rm} \ (\ref{exp3s})   \rbrace
\label{eq:conv}
\end{eqnarray}
comprises all contractions spanned by the experimental data; see Fig.\ \ref{scheme}. This definition takes into account possible nonzero values of the CP-phase $\delta$.

\begin{figure}[t!]
\includegraphics[width=0.9\linewidth]{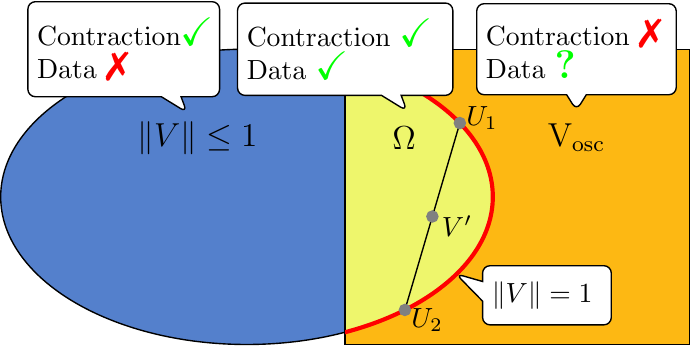} 
\caption{Illustration of the neutrino mixing space. Equation.~\eqref{eq:contr} states that physical mixing matrices $V\in \uosc$ lie within an abstract operator norm unit ball  represented by the ellipse. On the left are cases that are physically admissible, but are excluded by the experimental data \eqref{exp3s}. The middle region $\Omega$ represents relevant mixing matrices consistent with the experiment which are convex combinations  of unitary PMNS matrices. 
The cord slicing $\Omega$ consists of convex combinations of two PMNS matrices $U_1$ and $U_2$, e.g., $V'=\tfrac12U_1+\tfrac12 U_2$, which is further discussed in Sec. VI.
The rectangle on the right depicts the interval matrix form of the data $\uosc$ that is largely unphysical and may include contractions spanned by unitaries outside of \eqref{exp3s}.
}
\label{scheme}
\end{figure}

Currently, it is not possible to measure experimentally values of all elements of the neutrino mixing matrix in the three-dimensional flavor space \cite{Parke:2015goa}. To determine missing elements one uses Euler angles obtained from available data and calculates unreachable matrix elements of the neutrino mixing matrix by \eqref{upmns}.
The set $\Omega$ could be explored in the future in a broader context for data analysis and independent cross-checks with experiments that measure entries of the $3\times3$ mixing matrix directly rather than through Euler angles. 
The matrices in $\Omega$ with $M=1$ yield admissible PMNS matrices, while taking $M\geq 2$ allows us to obtain nonunitary contractions. Although the upper limit $M$ in \eqref{eq:conv} is not unique, in principle it can be bounded from above by Carath\'{e}odory's theorem, which states that if a point $x\in\mathbb{R}^n$ lies in the convex hull of some set $X$, then $x$ can be written as a convex combination of $s$-many points from $X$ such that $s\leq n+1$ \cite{eggleston1958}. Since matrices under study (in the CP-invariant case) are points in $\mathbb{R}^9$, elements of \eqref{eq:conv} are narrowed down to combinations of at most $M=10$ unitary $U_{PMNS}$ matrices. Thus one obtains an upper bound for the number of free parameters under study in this approach.
From the point of view of particle physics phenomenology (but also optimization theory), it would be interesting to refine $M$ even further and look for the smallest possible $M$, called the Carath\'{e}odory number, that would allow us to span $\Omega$ (see, e.g., \cite{Barany2012}). 
While certainly important, this issue goes beyond the present study. 
In the next section we give an example of two unitary PMNS matrices $U_1, U_2$ for which  $V'=\tfrac12U_1+\tfrac12 U_2$ is a contraction $\|V'\|=\sigma_1(V')=1$, but $\sigma_2(V')\approx\sigma_3(V') < 1$ within the accuracy of the interval matrix.
This exemplifies how to find nonunitary BSM cases within admissible set $\Omega$, through the analysis of singular values. 

\section{Dilations: Minimal dimensionality of the complete neutrino mixing matrix }
For BSM mixing matrices, it is possible to find minimal model extensions consistent with the data again using singular values.
A \emph{unitary dilation} is an operation that extends a matrix which is a contraction to a unitary matrix of an appropriate dimension.
Our approach to find a unitary dilation of possible smallest dimension employs the special case of {\it{cosine-sine}} (CS)   \cite{Allen06matrixdilations} decomposition of unitary matrices as follows.
It can be proven that any unitary matrix $U \in M_{(n+m) \times (n+m)}(\mathbb{C})$ can be brought to a canonical form 
$ W^\dagger U Q=
\left(
\begin{array}{cc|c}
I_{r} & 0 & 0 \\
0 & C & -S \\ \hline
0 & S & C
\end{array}
\right)$,
with $r=n-m$ and $C^2+S^2=I_{m}$, where one can choose block-diagonal unitaries $W=W_1\oplus W_2$ and $Q=Q_1\oplus Q_2$.
We use this result to extend any contraction $V \in \Omega$ to a unitary matrix.
First, we find a \emph{singular value} decomposition of $V$, i.e.,
$V=W_{1}\Sigma Q_{1}^{\dag}$,
where $W_{1},Q_{1}$ are unitary, and $\Sigma$ comprises the singular values $\sigma_i(V)$ and is diagonal.
Next, we determine the number $r$ of unit singular values defining $I_{r}$ and collect the rest into a diagonal matrix $C$. 
This yields $\Sigma= I_{r} \oplus C$.
Finally, we define $S=\sqrt{I_{m}-C^2}$  and choose $W_2$, $Q_2$ to be arbitrary unitaries of appropriate dimension.
Conjugating the CS matrix constructed in that way by $W$ and $Q$ yields the unitary dilation $U$ of $V$.
Below an example of a nonunitary contraction $V\in\Omega$ with $m=2$ will be discussed, extended into a unitary matrix $U$ of dimension 5. Any larger unitary dilation of $V$ can by obtained by the general form of CS decomposition; see Theorem 3 in Appendix D.
There, we also prove that $m$, also known as the dimension of the defect space, is the minimal number of new neutrino species necessary to ensure unitarity.
To obtain this number, one thus has to take experimental errors into account.
Assuming that the data $V$ include an error matrix $E$ and are of the form $V+ E$, we can establish the stability of the defect space.
We use Weyl inequalities \cite{Weyl1912,rao} for decreasingly ordered pairs of singular values of $V$ and $V+E$, which read
\begin{equation}
\label{weyl}
 \vert\sigma_i(V+E)-\sigma_i(V)\vert\leq\Vert E\Vert\;.
\end{equation}
In our case $E$ should be taken as $E_{ij}\sim 0.001$ by \eqref{ranges}, and the uncertainty in the precise value of singular values is bounded by $\Vert E\Vert = 0.003$. Note that this criterion applies both to the selection of contractions from the full interval matrix \eqref{ranges} and to determination of the minimal dimension of matrix dilation.

To show the dilation procedure in action, we restrict all matrix elements to real numbers; hence the complex phase $\delta$ is equal to zero and thus we work with orthogonal matrices. The first step is to pick a contraction from the convex hull $\Omega$ (\ref{eq:conv}). 
As an example, let us consider two unitary matrices obtained from the experimental ranges \eqref{exp3s}
\begin{equation} 
\begin{split}
\theta_{12}=31.38^{\circ} , \theta_{23}=38.4^{\circ}, \theta_{13}=7.99^{\circ}, \\
U_{1}=
\left(
\begin{array}{ccc}
0.845 & 0.516 & 0.139 \\
-0.482 & 0.624 & 0.615 \\
0.230 & -0.587 & 0.776
\end{array}
\right)
\end{split}
\end{equation}
\begin{equation} 
\begin{split}
\theta_{12}=35.99^{\circ}, \theta_{23}=52.8^{\circ}, \theta_{13}=8.90^{\circ}, \\
U_{2}=
\left(
\begin{array}{ccc}
0.799 & 0.581 & 0.155 \\
-0.455 & 0.417 & 0.787 \\
0.392 & -0.699 & 0.597
\end{array}
\right).
\end{split}
\end{equation}
The chosen convex combination will be constructed as a sum with an equal contribution of the above matrices:
\begin{equation}
\begin{split}
V'=\frac{1}{2}U_{1}+\frac{1}{2}U_{2}=
\left(
\begin{array}{ccc}
0.822 & 0.549 & 0.147 \\
-0.469 & 0.521 & 0.701 \\
0.311 & -0.643 & 0.687
\end{array}
\right).
\end{split}
\end{equation}
In order to make use of the CS decomposition and parametrize the unitary dilation $U$ of the matrix $U_{11}\equiv V'$, first we have to find its singular value decomposition
\begin{equation}
V'=W_{1}\Sigma Q_{1}^\dagger
\end{equation}
where
\begin{equation}
\begin{split}
&W_{1}=
\left(
\begin{array}{ccc}
-0.958 & 0.194 & -0.21 \\
-0.204 & -0.979 & 0.0279 \\
-0.200 & 0.0696 & 0.977
\end{array}
\right) \\
&\Sigma=
\left(
\begin{array}{ccc}
1 & 0 & 0 \\
0 & 0.991 & 0 \\
0 & 0 & 0.991
\end{array}
\right) \\
&Q_{1}=
\left(
\begin{array}{ccc}
-0.754 & -0.504 & -0.422 \\
0.646 & -0.452 & -0.615 \\
0.119 & -0.736 & 0.666 
\end{array}
\right).
\end{split}
\end{equation}
We will parametrize only the most interesting case of unitary dilation of a minimal dimensionality, and hence of a minimal number of additional neutrinos, i.e., the number of singular values strictly less than 1. Since the matrix $\Sigma$ determines the singular values of $V'$, this number equals 2. Hence it is possible to construct unitary dilation $U$ of the minimal dimension $5 \times 5$.

To complete the construction, we are left only with two free unitary $2\times 2$ ``parameters'' $W_{2}$ and $Q_{2}$, and for the sake of this example we choose them randomly:
\begin{equation}
\begin{split}
&W_{2}=
\left(
\begin{array}{cc}
-0.619 & 0.785 \\
0.785 & 0.619 
\end{array}
\right), \\
&Q_{2}=
\left(
\begin{array}{cc}
0.250 & -0.968 \\
0.968 & 0.250
\end{array}
\right).
\end{split}
\end{equation} 
Having all ingredients and making all necessary calculations, we find the following form of the unitary dilation of $V'$
given by $U= (W_1\oplus W_2) \Sigma (Q_1\oplus Q_2)^\dagger$:

\begin{equation}
\begin{small}
U=
\left(
\begin{array}{ccc|cc}
0.822 & 0.549 & 0.147 & 0.0207 & 0.0322 \\
-0.469 & 0.521 & 0.701 & 0.0292 & -0.128 \\
0.311 & -0.643 & 0.687 & -0.129 & -0.0237 \\ \hline
-0.041 & -0.0399 & 0.121 & 0.599 & 0.788 \\
0.0788 & -0.109 & -0.009 & 0.788 & -0.599
\end{array}
\right).
\end{small}
\end{equation}
Since we have freedom of choice of two unitary matrices, it is necessary to check how this choice influences the result. Let us generate randomly another pair:
\begin{equation}
\begin{split}
&W_{2}'=
\left(
\begin{array}{cc}
0.346 & -0.938 \\
0.938 & 0.346 
\end{array}
\right), \\
&Q_{2}'=
\left(
\begin{array}{cc}
-0.888 & -0.461 \\
-0.461 & 0.888
\end{array}
\right).
\end{split}
\end{equation}
Then we get the following unitary dilation:
\begin{equation}
\begin{small}
U'=
\left(
\begin{array}{ccc|cc}
0.822 & 0.549 & 0.147 & 0.0101 & 0.0369 \\
-0.469 & 0.521 & 0.701 & -0.115 & -0.0638 \\
0.311 & -0.643 & 0.687 & 0.0686 & -0.112 \\ \hline
0.0149 & 0.0716 & -0.112 & 0.124 & -0.984 \\
0.0867 & -0.091 & -0.0464 & -0.984 & -0.124
\end{array}
\right)\ .
\end{small}
\end{equation}
The matrices $U,U'$ differ by both the off-diagonal block and by the bottom-diagonal block. However, the scale of the off-diagonal block is comparable in both cases. The reason for this lies in the fact that to construct each of these blocks we use $C$ and $S$ fixed by the singular value decomposition of $V'$ matrices. The biggest difference can be observed in the bottom diagonal block since only the matrix $S$ is fixed in both cases. However, the global scale of each block (global in a sense of the Frobenius norm, which is an entrywise norm defined in \cite{books/daglib/0019187}) is conserved in each of these cases. Since this norm is unitarily invariant, the choice of $W_{1,2}$ and $Q_{1,2}$ does not change its value.

The dilation procedure described above is based exclusively on mixing matrices. In contrast, there are constructions in the literature which refer in addition to the mass spectrum; see for instance,  \cite{Casas:2001sr,Agostinho:2017wfs,Fong:2017gke}. 
  In the approach taken in this work, the information on the number of additional neutrinos, i.e., the dimension of the complete unitary mixing matrix, is nicely seen through the number of nonunit singular values. 
As discussed in the Setting section, our approach based on singular values and the dilation procedure is general, no matter if extra neutrino states are heavy (e.g., see-saw mechanisms) or light and sterile. 
As far as the present situation in neutrino physics is concerned, the minimal 3+1 neutrino scenario is still not excluded, though LSND and MiniBoone results make it a less probable scenario. For a global analysis see \cite{Donini:2011jh, Gariazzo:2017fdh, Dentler:2018sju}. Here we considered an example of an extension to $3+2$ dimensions. However, it is still possible to find one of the singular values strictly less than 1, while the remaining two are equal to 1, so extensions to the $3+1$ model are possible in our rough estimations. For complete future studies of dilations, among others, the following issues can be addressed in more detail. First, our $3+2$ example is only one of many elements of the complete convex hull $\Omega$. It would be interesting to map out if there are well-defined regions in the interval matrix with extensions to 4 dimensions, while others have a minimal dimension equal to 5 dimensions.
Second, in complete studies, CP-breaking mixings should be included. Finally, error estimation is crucial for future, more refined studies. So far we rely on the {Weyl} estimation \eqref{weyl}, which is a rough estimate. Our basic description indicates possible directions for further studies through the notion of singular values.  
 
\begin{figure}[t!]
\begin{center}
\includegraphics[scale=0.6]{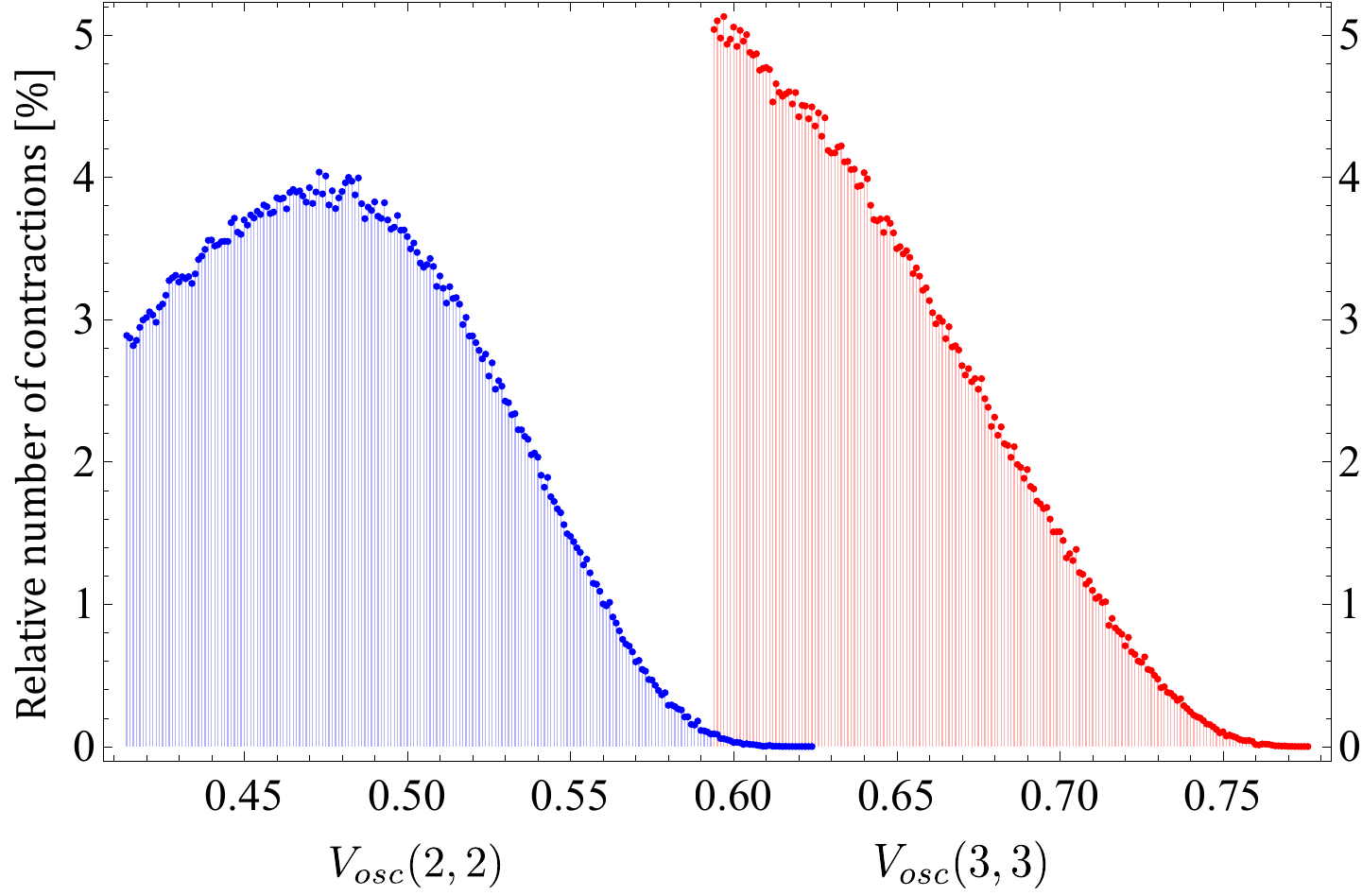} 
\caption{Typical distributions of contractions for the (2,2) and (3,3) elements of the $\uosc$ peaked inside and at the edge of allowed ranges. 
}
\label{dist}
\end{center}
\end{figure}

\section{Neutrino data analysis} 
The interval mixing matrix \eqref{ranges} contains unitary (SM) and nonunitary contraction (BSM) and unphysical matrices (the latter have to be discarded by the contraction property).
We have found that statistically about 4\% of matrices $V\in \uosc$ are contractions, while some unphysical ones have norms as large as $\|V\|=1.178$.
This result was obtained within $0.003$ accuracy, by uniformly sampling elements of the intervals of $\uosc$ with sufficiently high statistics.
All calculations presented in this work has been made in \texttt{Mathematica} \cite{Mathematica}.
The statistical analysis of distributions of contractions in $\uosc$ has been performed under the assumption of mixing parameter errors having a uniform distribution. 
This implies that values in $\uosc$ were also treated uniformly. A discretization of intervals in $\uosc$ was made with a step $0.001$ to match the precision of extreme values. Up to $10^9$ randomly generated matrices have been produced within $\uosc$ ranges for which singular values have been found. Next, the largest singular value for each random matrix was compared to the number $1.003$, to be consistent with the precision ensured by the stability of singular values,  
splitting in this way matrices into two sets of contractions and noncontractions.  

Likewise,  we analyze  distributions of contractions for a given element in $\uosc$. We  fix a value of one of the elements of $\uosc$ and then randomly generate matrices and make the same analysis as above. 
As an illustration, Fig.~\ref{dist} presents contraction distributions for two exemplary matrix elements taken from the full interval matrix. While one may argue that these diagrams show only statistical density, Proposition~\ref{poly} in the Appendix~C shows that there is a sharp matrix boundary (surface in $\mathbb{C}^9$) with an interior composed solely of contractions. 

If we shrink errors in \eqref{exp3s} to 1$\sigma$ C.L., we get $11\%$ of contractions, instead of 4\% discussed at the beginning of this section. 
Narrowing the angle ranges \eqref{exp3s} usually increases the amount of contractions in $\uosc$; however, for arbitrary angle ranges this does not always occur.
Concerning new physics, it is interesting to ask how strong a contraction can  be found  in $\uosc$.
The minimal value of the norm for $V\in \uosc$ is $\|V_{\rm min}
\|=0.961$ and can be obtained by sufficiently fine discretization of $\uosc$. 
Alternatively, 
$\|V_{\rm min}\|$ can readily be obtained by semidefinite programming, which is a very useful numerical tool when analyzing properties of interval matrices \cite{boyd2004convex}.

It should be stressed that we apply our methods to data in order to illustrate our matrix machinery in applications to neutrino mixing matrices but do not attempt to make a definite analysis. 
We have made rough estimations based on a construction where experimental data and PMNS formalism are used, though as mentioned already, the interval matrix can be obtained even directly without restriction to PMNS parametrizations when nonunitarity is assumed from the very beginning 
\cite{Parke:2015goa}. This interesting and universal option is left for separate and detailed future studies.

\section{Quark data analysis}
Our scheme in Fig.~\ref{scheme} is general enough to be used in the quark sector as well.
For quarks the unitary CKM mixing matrix \cite{Cabibbo:1963yz,Kobayashi:1973fv} can be parametrized in the same way as the PMNS mixing matrix for neutrinos: 
\begin{equation}  
\begin{split}
&V_{CKM} = 
\left(
\begin{array}{ccc}
V_{ud} & V_{us} & V_{ub} \\
V_{cd} & V_{cs} & V_{cb} \\
V_{td} & V_{ts} & V_{tb} 
\end{array}
\right)= 
\left(
\begin{array}{ccc}
    1 & 0 & 0 \\
    0 & c_{23}  & {s_{23}} \\
    0 & -s_{23} & {c_{23}}
\end{array}
\right) \\
&\times 
\left(
\begin{array}{ccc}
    c_{13} & 0 & s_{13} e^{-i\delta_\text{}} \\
    0 & 1 & 0 \\
    -s_{13} e^{i\delta_\text{}} & 0 & c_{13}
\end{array}
\right)
\left(
\begin{array}{ccc}
    c_{12} & s_{12} & 0 \\
    -s_{12} & c_{12} & 0 \\
    0 & 0 & 1
\end{array}
\right).
\end{split}
\end{equation}
Experimental results have established the following hierarchy of mixing parameters:
\begin{equation}\label{eq:53}
s_{13} \ll s_{23} \ll s_{12} \ll 1.
\end{equation}
Due to this order it is convenient to present $V_{CKM}$ in an approximate parametrization proposed by Wolfenstein \cite{Wolfenstein:1983yz}, which reflects the above hierarchy. The mixing parameters (\ref{eq:53}) are connected with Wolfenstein parameters in the following way:
\begin{equation}
\begin{split}
&s_{12}= \lambda, \\
&s_{23}= A\lambda^{2}, \\
&s_{13}e^{i\delta}=A\lambda^{3}(\rho+i\eta).
\end{split}
\end{equation}
This results in the following structure of the quark mixing matrix:
\begin{eqnarray}\label{eq:55}
V_{CKM}&=&
\left(
\begin{array}{ccc}
1-\frac{\lambda^{2}}{2} & \lambda & A\lambda^{3}(\rho -i \eta) \\
-\lambda & 1-\frac{\lambda^{2}}{2} & A\lambda^{2} \\
A\lambda^{3}(1-\rho -i \eta) & -A\lambda^{2} & 1
\end{array}
\right) \nonumber \\&+& \mathcal{O}(\lambda^{4}).
\end{eqnarray}
We are interested in how contractions are distributed within $V_{CKM}$ with respect to experimental values of the mixing parameters \cite{Olive:2016xmw}:
\begin{equation}
\begin{split}
\lambda=0.22506 \pm 0.00050, \\
A=0.811 \pm 0.026, \\
\bar{\rho}=0.124^{+0.019}_{-0.018}, \\
\bar{\eta}=0.356 \pm 0.011,
\end{split}
\end{equation}
where $\bar{\rho}=\rho(1-\lambda^{2}/2)$ and $\bar{\eta}=\eta(1-\lambda^{2}/2)$. The application of the above results to (\ref{eq:55}) gives us the following experimental intervals for elements of the mixing matrix:
\begin{equation}
\begin{split}
&V_{ud} \in \left[ 0.97456 , \ 0.97478 \right] \\
&V_{us} \in \left[ 0.22456 , \ 0.22556 \right] \\
&V_{ub}\in\left[0.00097 - 0.00362i , \ 0.00141 -0.00315i\right] \\
&V_{cd} \in \left[ -0.22556 , \ -0.22456 \right] \\
&V_{cs} \in \left[ 0.97456 , \ 0.97478 \right] \\
&V_{cb} \in \left[ 0.0396 , \ 0.0426 \right] \\
&V_{td}\in\left[0.00758-0.0362i , \ 0.00856-0.00315i\right] \\
&V_{ts} \in \left[ -0.0426 , \ -0.0396 \right] \\
&V_{tb}=1 
\end{split}
\label{ckmint}
\end{equation}
where intervals in the case of $V_{ub}$ and $V_{td}$ are to be understood as complex rectangles.
Our statistical analysis reveals that all matrices within $V_{CKM}$ are contractions with $0.002$ accuracy. Analysis of values of operator norm gives the following statistical result:
\begin{equation}
\begin{split}
&6 \% \ \rm{of} \ \Vert V_{CKM} \Vert = 1.002, \\
&94 \% \ \rm{of} \ \Vert V_{CKM} \Vert = 1.001.
\end{split}
\end{equation}
Let us recall that in the neutrino case, minimal and maximal deviations from unity are 0.961 and 1.178, respectively. 
It shows how much, as far as the precision of the analysis in the neutrino sector is concerned, still must be done there.

It is interesting that a vanishing fraction of matrices within $V_{CKM}$ has a norm strictly less than one. This can be a sign that the only contractions in the quark sector are unitary matrices. However, since we have used only the leading order of the mixing matrix expressed by the Wolfenstein parameters, additionally  more refined analysis of this sector is necessary.  
In principle, we do not have to rely on the Wolfenstein parametrization and the analysis can be done directly on quark data in the form of an interval matrix.  
At the LHC there are already direct measurements of $V_{tq}$ ($q=d,s,b$) by studying top production as well as its decays and charge asymmetry \cite{Khachatryan:2014nda,Sirunyan:2016cdg,Aaboud:2017pdi}. 
Our approach based on the interval matrix will become very interesting in the context of future collider experiments, like FCC-hh, with center of mass energies a few times larger than those of the LHC \cite{Golling:2016gvc}, where all elements of the interval mixing matrix can be probed directly with much better precision.

\section{Summary and Outlook}

We have shown how to recover physically admissible mixing matrices from the interval matrix representation of neutrino or quark data, 
namely, any contraction matrix within the interval matrix is physical and has properly correlated matrix elements.
This characterization is complete, as any contraction can be completed into a unitary matrix via a unitary dilation procedure which yields an extension of minimal dimension.  
The approach is universal in the sense that it does not invoke any specific  parametrization and is based on general features of the interval matrices.
Physical mixing matrices consistent with the experiment are shown to have the structure of a convex hull over admissible PMNS matrices.

Singular values play a special role in our analysis. The general observation is that whenever we find singular values smaller than one, it is a \emph{signature of BSM}.
This observable seems to be an interesting alternative to other quantifiers of unitarity-breaking so far employed in literature.
We are commenting on possible analysis in the quark sector and our estimations based on Wolfenstein parametrization point out very little space for nonunitary effects there.

Finally, assuming a BSM scenario, we show how to construct a unitary mixing matrix of minimal dimension larger than 3 consistent with data. It allows us  in particular to construct a dilation procedure to determine the \emph{minimal} number of extra neutrino species, compatible with experimental data in a BSM scenario. 
This is potentially a very fertile area of study. 
Should a BSM signal be found, dilation theory will be a promising point of departure for further analysis.
Of course our studies are not complete with this commencing paper. The estimation of errors to judge unambiguously deviations of singular values from unity will be crucial in the future. In this work we estimate errors on singular values through Weyl inequalities.

Our methods are based on advanced matrix analysis, studying the singular values of mixing matrices.
We apply a model-independent analysis based on the interval matrix to the present data, in a way that may become significant in future experiments that will measure entries of this matrix directly. It can also be useful through Propositions 1 and 2 to cross-check with other analyses based on specific parametrizations, since the contraction condition is easy to apply.

We shall go further in this direction and merge our studies on mixings (eigenfuncion problems) with masses (eigenvalues). 
For instance we could study the angle between subspaces of the mass matrices to connect neutrino masses with their mixing. This approach is closely related to the methodology presented in our work. 
Moreover, a separate analysis of the properties of the neutrino mass spectrum could be done exclusively. For this we might adopt many advanced methods of matrix analysis, e.g., Gershgorin circles. 
Clearly further potential for practical applications of our procedures is there.

\section*{Acknowledgements} 
This publication was made possible through the support of a grant from the John Templeton Foundation (Grant No. 60671) and the support of the Polish National Science Centre (NCN), Grant  No.~DEC-2013/11/B/ST2/04023.  We would like to thank B. Kayser, C. Giunti,  M. Laveder, T. Riemann and M.
Zra\l ek for useful comments.

\section*{Appendix}
  
In the following we give the technical details supplementing the results of the main text. 
We begin in Appendix A by providing more details on contractions as principal submatrices of unitary matrices.
Then in Appendix B we give a very simple example of how contractions allow us to restrict parametrizations of mixing matrices.
We then provide a section in Appendix C describing the interval matrices within convex geometry.
  In Appendix D we provide a description of the theory of matrix dilations. 
In Appendix E various nonunitary parametrizations are classified. Their relation to contractions is discussed.

\section*{A. Contractions}
A matrix norm is a function $\Vert \cdot \Vert$ from the set of all complex matrices  into $\mathbb{R}$ that satisfies for any $A,B\in M_{n\times n}$ the following properties:
\begin{equation}
\label{norm}
\begin{split}
& \Vert A \Vert \geq 0 \  \begin{rm}and\end{rm} \  \Vert A \Vert =0 \Leftrightarrow A=0, \\
& \Vert \alpha A \Vert= \vert \alpha \vert \Vert A \Vert, \alpha \in \mathbb C, \\
&\Vert A + B \Vert \leq \Vert A \Vert + \Vert B \Vert, \\
&\Vert A  B \Vert \leq \Vert A \Vert  \Vert B \Vert. \\
\end{split}
\end{equation}
In other words, a matrix norm is a vector norm (first three conditions in (\ref{norm})) with an additional condition of submultiplicativity.
The most important norm in our work is the operator norm
 $\Vert A \Vert  = \max_{\Vert x \Vert = 1} \Vert Ax \Vert$,
 for which one can prove that it is equal to the largest singular value $\Vert A\Vert=\max_i \sigma_{i}(A)$, where we have $\sigma_i(A)=\sqrt{\lambda_i(AA^\dagger)}$; i.e., singular values are the positive square roots of the eigenvalues of $AA^\dagger$ denoted by $\lambda_i(AA^\dagger)$.
We note that there exist other matrix norms that bring different properties into focus \cite{books/daglib/0019187} but are less important for mixing matrices.

We now consider any principal submatrix $V$ of a unitary matrix $U$ and show that it is a contraction, i.e.,  $\|V\|\le1$ in the operator norm.

\begin{prop} 
         \label{prop:sub} 
          If $A\in M_{n\times n}$ and $B\in M_{m\times m}$ is any principal submatrix of $A$, then
          \begin{equation}
           \Vert B\Vert\leq\Vert A\Vert.
          \end{equation}
         \end{prop}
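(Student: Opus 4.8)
The plan is to prove the inequality directly from the variational definition of the operator norm recalled in Appendix~A, namely $\Vert B\Vert = \max_{\Vert y\Vert = 1}\Vert By\Vert$, by realizing the action of $B$ on any vector as the restriction of the action of $A$ on a suitably embedded vector. Let $S\subseteq\{1,\ldots,n\}$ with $|S|=m$ be the index set selecting the rows and columns of the principal submatrix, so that $B_{kl}=A_{kl}$ for $k,l\in S$. The first step is to fix an arbitrary unit vector $y\in\mathbb{C}^m$ and define its zero-padded extension $x\in\mathbb{C}^n$ by placing the entries of $y$ in the positions indexed by $S$ and setting every other coordinate to zero. Padding with zeros does not change the Euclidean length, so $\Vert x\Vert=\Vert y\Vert=1$.

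The key step is then the coordinatewise identity. For every $k\in S$ one computes $(Ax)_k=\sum_{l=1}^n A_{kl}x_l=\sum_{l\in S}A_{kl}y_l=(By)_k$, where the essential use of the \emph{principal} structure (the row and column index sets coincide) is that the surviving summation indices $l\in S$ are exactly the ones selecting the columns of $B$. Discarding the output coordinates outside $S$ can only lower the norm, so that $\Vert By\Vert^2=\sum_{k\in S}|(By)_k|^2=\sum_{k\in S}|(Ax)_k|^2\leq\sum_{k=1}^n|(Ax)_k|^2=\Vert Ax\Vert^2\leq\Vert A\Vert^2\Vert x\Vert^2=\Vert A\Vert^2$. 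Hence $\Vert By\Vert\leq\Vert A\Vert$, and taking the maximum over unit vectors $y$ yields $\Vert B\Vert\leq\Vert A\Vert$.

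An even shorter alternative uses the submultiplicativity axiom from \eqref{norm}: writing $P$ for the $m\times n$ coordinate-selection matrix whose rows are the orthonormal vectors $\{e_k^{\,T}: k\in S\}$, one has $B=PAP^{T}$ together with $\Vert P\Vert=\Vert P^{T}\Vert=1$, whence $\Vert B\Vert=\Vert PAP^{T}\Vert\leq\Vert P\Vert\,\Vert A\Vert\,\Vert P^{T}\Vert=\Vert A\Vert$. The only place requiring genuine care — really the sole obstacle — is the bookkeeping that guarantees the input embedding is consistent with the columns of $B$ while the restriction of the output to $S$ only decreases the norm; the principal case is the cleanest instance of this idea, and the same argument extends to a general submatrix by restricting input and output index sets independently.
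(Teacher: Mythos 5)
Your main argument is correct and follows essentially the same route as the paper: embed a unit vector by zero-padding into $\mathbb{C}^n$, compare $\Vert By\Vert$ with $\Vert Ax\Vert$, and take the supremum over the unit sphere. In fact you are more careful than the paper at the one delicate point. The paper asserts an \emph{equality} $\Vert Bx\Vert=\Vert Ay\Vert$ for the zero-padded embedding, but this is false in general for a principal submatrix: the rows of $A$ that were deleted to form $B$ can act nontrivially on the embedded vector, so $\Vert Ay\Vert^2=\Vert Bx\Vert^2+\sum_{k\notin S}\vert (Ay)_k\vert^2\geq\Vert Bx\Vert^2$, with equality only when those extra components vanish. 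The correct relation is the inequality $\Vert Bx\Vert\leq\Vert Ay\Vert$, which still yields the proposition, and this is exactly what your coordinatewise computation establishes by discarding the output coordinates outside $S$. Your alternative via $B=PAP^{T}$ with the coordinate-selection matrix $P$ satisfying $\Vert P\Vert=\Vert P^{T}\Vert=1$ is a genuinely different and slicker argument: it invokes only the submultiplicativity axiom from the norm definition in Appendix~A, and, as you note, it immediately covers arbitrary (not necessarily principal) submatrices by choosing independent row and column selections $B=P_1AP_2^{T}$, whereas the embedding argument needs the input and output index bookkeeping you describe.
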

         Proof. It is straightforward to see that for any unit $x\in\mathbb{C}^m$ there is a unit embedding $y\in\mathbb{C}^n$ of $x$ such that
         \begin{equation}
          \Vert Bx\Vert=\Vert Ay\Vert
         \end{equation}
         (namely, by inserting zeros at entries of $y$ corresponding to columns of $A$ deleted to obtain $B$).
         Furthermore, the range of this embedding is a subspace of $\mathbb{C}^n$, and hence 
         \begin{equation}
          \sup_{\Vert x\Vert=1}\Vert Bx\Vert\leq \sup_{\Vert y\Vert=1}\Vert Ay\Vert,
         \end{equation}
         which gives the result.\hfill$\blacksquare$
         
         The next observation is almost trivial, yet is crucial in the analysis of neutrino mixing matrices in the main text.
         \begin{cor}
          Let $U\in M_{n}$ be unitary. Then $\Vert U\Vert=1$ and any submatrix $V$ of $U$ is a contraction.        \end{cor}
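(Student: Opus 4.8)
The plan is to dispatch the two assertions separately, both as immediate consequences of the singular-value characterization of the operator norm together with Proposition~\ref{prop:sub}. First I would establish $\Vert U\Vert=1$. Since $U$ is unitary, $UU^\dagger=I$, so every eigenvalue $\lambda_i(UU^\dagger)$ equals $1$ and hence every singular value $\sigma_i(U)=\sqrt{\lambda_i(UU^\dagger)}=1$. Because the operator norm equals the largest singular value, $\Vert U\Vert=\max_i\sigma_i(U)=1$. Equivalently, one can argue directly that $U$ acts as an isometry, $\Vert Ux\Vert^2=x^\dagger U^\dagger Ux=\Vert x\Vert^2$, so the supremum of $\Vert Ux\Vert$ over unit vectors is exactly $1$.

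For the second assertion I would simply invoke the bound just proved in Proposition~\ref{prop:sub}: any principal submatrix $V$ of $U$ satisfies $\Vert V\Vert\le\Vert U\Vert=1$, which is precisely the contraction property. This already covers the physically relevant case, since the block $V$ appearing in \eqref{ugen} is obtained by deleting the heavy rows and columns symmetrically and is therefore principal.

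To cover arbitrary (not necessarily principal) submatrices in the full generality of the statement, I would factor any submatrix as a composition $V=\Pi_R\,U\,E_C$, where $E_C:\mathbb{C}^{|C|}\to\mathbb{C}^{n}$ is the isometric coordinate embedding that inserts zeros in the deleted column slots and $\Pi_R:\mathbb{C}^{n}\to\mathbb{C}^{|R|}$ is the coordinate restriction onto the retained rows. Both factors have operator norm at most $1$ — the embedding preserves norms and the restriction merely discards coordinates — so submultiplicativity of the operator norm from \eqref{norm} yields $\Vert V\Vert\le\Vert\Pi_R\Vert\,\Vert U\Vert\,\Vert E_C\Vert=1$.

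I expect no genuine obstacle, as the text itself flags the statement as almost trivial. The only point warranting care is that the embedding argument underlying Proposition~\ref{prop:sub} as written is phrased for column deletion and gives $\Vert Bx\Vert\le\Vert Ay\Vert$ rather than equality once rows are also removed; but the inequality is exactly what the supremum comparison requires, so the conclusion $\Vert V\Vert\le\Vert U\Vert=1$ stands in every case.
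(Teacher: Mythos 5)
Your proof is correct, and its core is the same as the paper's: the paper likewise obtains $\Vert U\Vert=1$ from $\lambda_i(UU^\dagger)=\lambda_i(I)$ and then cites Proposition~\ref{prop:sub} to get $\Vert V\Vert\leq\Vert U\Vert=1$, full stop. What you add beyond the paper is worthwhile, though. First, the paper's corollary says \emph{any} submatrix while Proposition~\ref{prop:sub} is stated only for \emph{principal} submatrices; your factorization $V=\Pi_R\,U\,E_C$ with the isometric column embedding, the coordinate restriction, and submultiplicativity from \eqref{norm} closes that gap cleanly and covers non-principal (even rectangular) submatrices, which the paper leaves implicit. Second, your closing remark is accurate and sharper than the paper's own argument: in the proof of Proposition~\ref{prop:sub} the claimed equality $\Vert Bx\Vert=\Vert Ay\Vert$ holds only if no rows are deleted; once rows are removed, $Ay$ retains contributions in the deleted coordinates and one has $\Vert Bx\Vert\leq\Vert Ay\Vert$ in general — which, as you note, is all the supremum comparison needs. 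So your write-up is both a faithful reproduction of the paper's route and a small repair of it.
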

Proof. The equality $\lambda_i(UU^\dagger)=\lambda_i(I)$ implies that $\|U\|=1$. By Proposition \ref{prop:sub}, for any submatrix $V$ of $U$ it holds that $\Vert V\Vert\leq\Vert U\Vert= 1$; hence $V$ is a contraction.\hfill$\blacksquare$

  \section*{B. Unitarity and contractions: Toy example}
Here we provide more details on problems occurring when studying nonunitary $\upmnsca$ through a particular parametrization.
For $\upmns$ it holds that the sum of probability of neutrino oscillations equals 1:
\begin{equation}
\sum\limits_{\alpha} P_{i \alpha}=1,\;\;\; e.g. \;\;\;P_{ee}+P_{e \mu}+P_{e \tau}=1.
\end{equation}

However, for a nonunitary $U$ analogous relation is not 
fulfilled. Let us see it in a simple case of two flavors (the same can be done for a dimension-3 modified $\upmns$ matrix),
when $U$ is defined as ($\Theta_2=\Theta_1+\epsilon$)

\begin{equation}
U=
\begin{pmatrix} \cos{\Theta_1} &  \sin{\Theta_1} \\
-\sin{\Theta_2} &  \cos{\Theta_2}  
\end{pmatrix}
.\label{utoy}
\end{equation}

In this case we get $\Delta_{ij} \propto  (m^2_{i}-m^2_{j}) \frac{L}{E}$ and
\begin{eqnarray}
\sum\limits_{\alpha=e,\mu} P_{e \alpha} & = &
P_{ee}+P_{e \mu}\\ &=& 1+4 \epsilon \sin^2{\Delta_{21}}  
\sin{\Theta_1} \cos{\Theta_1} \cos{2 \Theta_1} +
{\cal{O}} (\epsilon^2), \nonumber \\&& \nonumber
\end{eqnarray}
\begin{eqnarray}
\sum\limits_{\alpha=e,\mu} P_{\mu \alpha} & = &
P_{\mu e}+P_{\mu \mu}  \\&=&
1-4 \epsilon \sin^2{\Delta_{21}}  
\sin{\Theta_1} \cos{\Theta_1} \cos{2 \Theta_1}
+{\cal{O}} (\epsilon^2). \nonumber
\end{eqnarray}

We can see that the sum  can be either  larger or smaller than 
1. This example was given in \cite{Czakon:2001em}; however, no clue at that time was given about how to interpret possible results when the sum of probabilities does not equal 1. Here we show that matrix (\ref{utoy}) is not the right way to parametrize BSM effects. Let us find the norm which helps us to interpret the matrix (\ref{utoy}). 

First, we calculate $UU^{T}$ and $U^{T}U$ for (\ref{utoy}), $s(c)_a \equiv \sin (\cos)\Theta_{a} $, as 

\begin{equation}
UU^{T}=
\left(
\begin{array}{cc}
1 & s_{1}c_{2} -s_{2}c_{1} \\
s_{1}c_{2} -s_{2}c_{1} & 1
\end{array}
\right),
\end{equation}

\begin{equation}
U^{T}U=
\left(
\begin{array}{cc}
c_{1}^{2}+ s_{2}^{2} & c_{1}s_{1}-s_{2}c_{2} \\
c_{1}s_{1}-s_{2}c_{2} & s_{1}^{2}+c_{2}^{2}
\end{array}
\right).
\end{equation}
As for the real $A$, we have $\Vert A^{T}A \Vert = \Vert AA^{T} \Vert = \Vert A \Vert^{2}$; we can focus only on one of these products. We write  $UU^{T}$   in the following form:
\begin{equation}
\begin{split}
UU^{T}&=
\left(
\begin{array}{cc}
1 & s_{1}c_{2} -s_{2}c_{1} \\
s_{1}c_{2} -s_{2}c_{1} & 1
\end{array}
\right) \\
&=
\left(
\begin{array}{cc}
1 & 0 \\
0 & 1
\end{array}
\right)+
\left(
\begin{array}{cc}
0 & s_{1}c_{2} -s_{2}c_{1} \\
s_{1}c_{2} -s_{2}c_{1} & 0
\end{array}
\right).
\end{split}
\end{equation}
This can be simplified into  
\begin{equation}
UU^{T}=
\left(
\begin{array}{cc}
1 & 0 \\
0 & 1
\end{array}
\right)+
\left(
\begin{array}{cc}
0 & s_{3} \\
s_{3} & 0
\end{array}
\right) \equiv
I + B
\end{equation}
where $s_{3} \equiv \sin \Theta_{3} = \sin ( \Theta_{1} -\Theta_{2})$. $B$ is symmetric and its eigenvalues are equal to $\pm s_{3}$. Let $V$ be a unitary matrix such that $V^{T}BV=D = \mathrm{diag}(s_{3}, -s_{3})$. Since the operator norm is unitarily invariant \cite{books/daglib/0019187}, we write
\begin{eqnarray}
\Vert UU^{T} \Vert &=& \Vert I + B \Vert = \Vert V^{T} (I + B) V \Vert = \Vert I + V^{T}BV \Vert \nonumber \\&=& \Vert I + D \Vert .
\end{eqnarray}
Since $I + D$ equals
\begin{equation}
\left(
\begin{array}{cc}
1+s_{3} & 0 \\
0 & 1-s_{3}
\end{array}
\right),
\end{equation}
its operator norm, i.e., the largest singular value, is equal to
\begin{equation}
\begin{split}
1+s_{3} \quad if \ s_{3} \geq 0, \\
1-s_{3} \quad if \ s_{3} < 0.
\end{split}
\end{equation}
So we can see that by adding $B$ to the identity matrix we cannot decrease the operator norm: 
\begin{equation}
1= \Vert I \Vert \leq \Vert I + B \Vert = \Vert UU^{T} \Vert=1+|s_3|.
\end{equation}
Thus
\begin{equation}
\Vert U \Vert \geq 1.
\label{contrex}
\end{equation} \\
As discussed in the main text, a physically meaningful theory should include only fields for which contraction relation $\Vert U \Vert \leq 1$ is fulfilled, and $\Vert U \Vert > 1$, being a part of some more complex complete theory based on unitarity  cannot describe BSM effects at all. 
The result (\ref{contrex}) implies that not all parametrizations which violate unitarity are a proper choice, and a toy mixing matrix (\ref{utoy}) is superfluous from the physical point of view. It fulfills $\Vert U \Vert =1$  for $ \epsilon=0 $, but then a trivially unitary matrix is recovered.

\section*{C. Convex geometry and interval matrix analysis}

Here we gather necessary facts and comments that refer to convex geometry, which plays a crucial role in the paper in a twofold way: it gives a very convenient parametrization of contraction matrices (see Theorem \ref{th:k-m}) and provides some decisive conditions on distributions of (non)contractions in interval matrices (see Proposition \ref{poly}).
\\
\begin{defi} \cite{magaril2003convex}
A nonempty set $A \subset \mathbb{R}^{n}$ is convex if, along with any of its two points $x$ and $y$, it contains the line segment $[x,y]$, i.e., the set
\begin{equation}
[x,y]= \lbrace z \in \mathbb{R}^{n}: z=\alpha x+ (1-\alpha)y, 0\leq \alpha \leq 1 \rbrace .
\end{equation}
\end{defi}

\begin{defi} \cite{Krantz2014}
Let $A \subseteq \mathbb{R}^{n}$ be any set. The convex hull of $A$ denoted by $\mathrm{conv}(A)$ is the intersection of all geometrically convex sets that contain $A$.
\end{defi}

\begin{lem} \cite{Matouek:2006:UUL:1214763}
The convex hull of the set $A \subseteq \mathbb{R}^{n}$ equals the set
\begin{equation}
\begin{split}
\mathrm{conv}(A)= \lbrace \sum_{i}^{m} &\alpha_{i}x_{i} \mid m \geq 1, x_{1},...,x_{m} \in A \subseteq \mathbb{R}^{n}, \\
 &\alpha_{1},...,\alpha_{m} \geq 0, \sum_{i}^{m}\alpha_{i}=1 \rbrace.
\end{split}
\end{equation}
of all convex combinations of finitely many points of $A$.
\end{lem}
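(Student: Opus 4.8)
The plan is to prove the two set inclusions separately. Throughout, write $C$ for the set of all finite convex combinations of points of $A$ displayed on the right-hand side, and recall that $\mathrm{conv}(A)$ is by definition the intersection of all convex sets containing $A$, hence the smallest such set.

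First I would establish $\mathrm{conv}(A)\subseteq C$ by showing that $C$ is itself a convex set containing $A$; since $\mathrm{conv}(A)$ is contained in every convex set containing $A$, this inclusion then follows immediately. Containment $A\subseteq C$ is obtained by taking $m=1$ and $\alpha_1=1$. Convexity of $C$ is checked directly: given $p=\sum_{i=1}^{m}\alpha_i x_i$ and $q=\sum_{j=1}^{k}\beta_j y_j$ in $C$ and any $\lambda\in[0,1]$, the point $\lambda p+(1-\lambda)q$ is again a combination of the points $x_i,y_j\in A$ whose coefficients $\lambda\alpha_i$ and $(1-\lambda)\beta_j$ are nonnegative and sum to $\lambda+(1-\lambda)=1$, so it lies in $C$.

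For the reverse inclusion $C\subseteq\mathrm{conv}(A)$, I would fix an arbitrary convex set $S$ with $A\subseteq S$ and argue by induction on the number $m$ of points appearing in a convex combination that every such combination belongs to $S$; since $S$ is arbitrary, intersecting over all such $S$ then gives $C\subseteq\mathrm{conv}(A)$. The base case $m=1$ is immediate since $x_1\in A\subseteq S$. For the inductive step, given $\sum_{i=1}^{m}\alpha_i x_i$ with $\alpha_m<1$, I would set $\beta=1-\alpha_m=\sum_{i=1}^{m-1}\alpha_i>0$ and rewrite the combination as $\beta\big(\sum_{i=1}^{m-1}\tfrac{\alpha_i}{\beta}x_i\big)+\alpha_m x_m$; the bracketed term is a convex combination of $m-1$ points of $A$, hence lies in $S$ by the induction hypothesis, and the full expression is then a convex combination of two points of $S$, which lies in $S$ by convexity. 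The degenerate case $\alpha_m=1$ reduces at once to $x_m\in S$.

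The argument is routine; the only step requiring a little care is the inductive step in the second inclusion, where one must renormalize the first $m-1$ coefficients by $\beta$ in order to legitimately invoke the two-point convexity of $S$, and separately dispose of the degenerate case $\beta=0$. Combining the two inclusions yields the claimed identity $\mathrm{conv}(A)=C$.
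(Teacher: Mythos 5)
Your proof is correct and complete: the paper states this lemma with only a citation to Matou\v{s}ek--G\"{a}rtner and offers no proof of its own, and your double-inclusion argument (showing the set $C$ of finite convex combinations is a convex set containing $A$, then inducting on $m$ to place every combination inside an arbitrary convex $S\supseteq A$) is precisely the standard textbook proof found in such references. You also correctly handle the one point where care is needed, namely renormalizing by $\beta=1-\alpha_m$ in the inductive step and disposing separately of the degenerate case $\alpha_m=1$, so nothing is missing.
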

The following theorem states that there is an analogue of a linear span in convex geometry, such that the span is over all extreme points of the set $A$, i.e., points that are not interior points of any line segment lying entirely in $A$.
\begin{theorem}\label{th:k-m}{(Krein-Milman)} \cite{Krantz2014} \\
Let $X$ be a topological vector space in which the dual space $X^{*}$ separates points. If $A$ is a compact, convex set in $X$, then $A$ is a closed, convex hull of its extreme points.
\label{th:krein}
\end{theorem}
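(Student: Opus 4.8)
The plan is to prove the two inclusions separately, the nontrivial direction being that $A$ is contained in the closed convex hull of its extreme points. Write $\mathrm{ext}(A)$ for the set of extreme points and $K=\overline{\mathrm{conv}}(\mathrm{ext}(A))$. Since $A$ is closed and convex and contains $\mathrm{ext}(A)$, one has $K\subseteq A$ immediately, so all the work is in the reverse inclusion.

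The first main step is to show that $A$ possesses extreme points at all, for which I would argue via Zorn's lemma applied to the family of \emph{extreme subsets} (faces) of $A$. Call a nonempty closed set $S\subseteq A$ extreme if, whenever an interior point $\alpha x+(1-\alpha)y$ (with $0<\alpha<1$ and $x,y\in A$) of a segment lies in $S$, then both endpoints $x,y$ lie in $S$. Order the extreme subsets by reverse inclusion. Any chain has the intersection of its members as an upper bound: this intersection is again extreme, and it is nonempty because each member is a closed subset of the compact set $A$, hence compact, and the chain has the finite intersection property. Zorn's lemma then yields a minimal extreme set $M$.

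The second step is to show that such a minimal $M$ is a single point, which is where the hypothesis that $X^{*}$ separates points enters. If $M$ contained two distinct points $p\ne q$, choose $f\in X^{*}$ with $f(p)\ne f(q)$; since $M$ is compact the set $M_{0}=\{x\in M: f(x)=\max_{M} f\}$ is a nonempty closed proper subset of $M$, and a short computation shows it is again extreme in $A$, contradicting minimality. Hence $M=\{e\}$ with $e\in\mathrm{ext}(A)$, so extreme points exist. The final step establishes $A\subseteq K$ by contradiction: suppose some $x_{0}\in A\setminus K$. Using the separation available in the ambient space (Hahn--Banach for a point outside a closed convex set; in the application $X=\mathbb{R}^{9}$ this is elementary), pick $f\in X^{*}$ with $f(x_{0})>\sup_{y\in K} f(y)$. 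The face $F=\{x\in A: f(x)=\max_{A} f\}$ is nonempty by compactness, closed, convex, and itself an extreme subset of $A$; applying the first two steps to $F$ produces an extreme point $e$ of $F$, which is in fact an extreme point of $A$ and therefore lies in $K$. But then $f(e)=\max_{A} f\ge f(x_{0})>\sup_{K} f\ge f(e)$, a contradiction, so $A=K$ and, $K$ being closed, the claim follows.

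I expect the main obstacle to be the Zorn's-lemma existence argument together with the collapse of minimal extreme sets to points: one must verify carefully that intersections of chains of extreme sets remain extreme and nonempty, and that the level set $M_{0}$ (and likewise the face $F$) is genuinely extreme in $A$ rather than merely in $M$. The separating-functional step is routine in the finite-dimensional setting relevant to this paper, but in full generality it relies on the locally convex structure encoded by $X^{*}$ separating points.
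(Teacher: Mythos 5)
The paper does not actually prove this statement: Theorem~\ref{th:k-m} is quoted as a known result from \cite{Krantz2014} to support the convex-geometry discussion of Appendix~C, so there is no in-paper argument to compare against. Your proposal is the standard Zorn's-lemma proof of Krein--Milman (as in Rudin's \emph{Functional Analysis} or in Krantz's book) and it is correct in outline: ordering the closed extreme subsets (faces) by reverse inclusion, obtaining nonemptiness of chain intersections from compactness and the finite intersection property, collapsing a minimal extreme set to a singleton via a functional separating two putative points, and running the final separation argument on the face $F=\{x\in A:\ f(x)=\max_A f\}$. You are also right to flag, and correctly handle, the transitivity issue: one must check that a face of a face of $A$ is again a face of $A$, which is what makes $M_0$ and the extreme point of $F$ extreme in $A$ itself.

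Two points in your write-up deserve to be made explicit rather than left implicit. First, your opening inclusion $K\subseteq A$ uses that $A$ is closed; this is not a hypothesis but follows from compactness, because $X^{*}$ separating points makes the weak topology, and a fortiori the finer original topology, Hausdorff. Second, your final separation step as literally stated (``Hahn--Banach for a point outside a closed convex set'') requires local convexity, which is not assumed of $X$; the standard repair --- presumably what you intend by ``the locally convex structure encoded by $X^{*}$'' --- is to pass to the weak topology $\sigma(X,X^{*})$, which is locally convex and Hausdorff precisely because $X^{*}$ separates points, and to observe that $K=\overline{\mathrm{conv}}(\mathrm{ext}(A))$ is compact, being a closed subset of the compact set $A$; the compact-versus-closed separation theorem then applies in the weak topology and yields $f\in X^{*}$ with $f(x_{0})>\sup_{K}f$ (taking real parts if the scalar field is complex). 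With these two remarks supplied, your argument is complete, and in the paper's actual application ($X=\mathbb{R}^{9}$, the unit ball of the operator norm with unitaries as extreme points) every step is elementary, as you note.
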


\begin{prop}
\label{poly}
 Once a set of matrix contractions is given, the convex hull with vertices at this set contains only contractions. 
\end{prop}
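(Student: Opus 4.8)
The plan is to show that any convex combination of contractions is itself a contraction, which is precisely the statement once we identify the vertices of the convex hull with a given set of contraction matrices. The essential tool is the triangle inequality for the operator norm together with its absolute homogeneity, both of which are listed among the defining properties of a matrix norm in \eqref{norm}. Indeed, this is the same computation already carried out in the main text when discussing the set $\Omega$ of convex combinations of PMNS matrices, so the argument here is a direct generalization from unitaries to arbitrary contractions.

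First I would let $\{V_1,\dots,V_m\}$ be the prescribed set of contractions, so that $\|V_i\|\le 1$ for each $i$, and take an arbitrary point $V$ of the convex hull. By the characterization of the convex hull as the set of all finite convex combinations of its generating points (the Lemma preceding Theorem~\ref{th:k-m}), we may write $V=\sum_{i=1}^m \alpha_i V_i$ with coefficients $\alpha_i\ge 0$ and $\sum_{i=1}^m \alpha_i=1$. Next I would estimate the operator norm directly:
\begin{equation}
\|V\| = \Bigl\| \sum_{i=1}^m \alpha_i V_i \Bigr\| \le \sum_{i=1}^m \|\alpha_i V_i\| = \sum_{i=1}^m \alpha_i \|V_i\| \le \sum_{i=1}^m \alpha_i = 1,
\end{equation}
where the first inequality is the (iterated) triangle inequality, the middle equality uses $\|\alpha_i V_i\|=|\alpha_i|\,\|V_i\|=\alpha_i\|V_i\|$ since $\alpha_i\ge 0$, and the final inequality uses $\|V_i\|\le 1$. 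This shows $\|V\|\le 1$, i.e. $V$ is a contraction, completing the argument.

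There is no serious obstacle here; the proof is essentially a one-line norm estimate. The only point requiring a word of care is the passage from ``a point of the convex hull'' to ``a finite convex combination of the generating contractions,'' which is guaranteed by the Lemma quoted above and is the reason the set of contractions forms a convex body (it is the operator-norm unit ball). One might also remark that the same computation shows the conclusion is sharp in the sense that equality $\|V\|=1$ can occur even when $V$ is a nontrivial combination of unitary (hence norm-one) contractions, which is exactly the phenomenon exploited in Sec.~VI to detect unitarity-breaking through the \emph{smaller} singular values.
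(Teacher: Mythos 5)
Your proof is correct and uses the same essential argument as the paper: the triangle inequality and homogeneity of the operator norm applied to a convex combination of contractions. The paper merely dresses this in the interval-matrix language, identifying matrices with points of a polytope in $\mathbb{R}^{n^2}$ and checking two-point convexity, while you handle the general finite combination directly; the mathematical content is identical.
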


Proof. Let $n$ be fixed and consider the nonempty polytope $P=\times_{k=1}^{n^2}[a_k,b_k]$.
To every $p\in P$ we associate a matrix $A^{(p)}$ with entries $A^{(p)}_{i,j}=p_{\zeta(i,j)}$, where  $\zeta: [n]^2\rightarrow [n^2]$ is the bijective map defined by $\zeta(j,k)= (j-1)n+k$. 
We will show that the subset of matrices based in $P$ is convex; i.e., for $p,q\in P$, if $\|A^{(p)}\|\le1$ and $\|A^{(q)}\|\le1$, then $\|A^{(\lambda p+(1-\lambda)q)}\|\le1$ for $0\le\lambda\le1$.

We now explicitly calculate 
\begin{align}
A^{(\lambda p+(1-\lambda)q)}_{i,j}&=\lambda p_{\zeta(i,j)}+(1-\lambda)q_{\zeta(i,j)}\\&=\lambda A^{(p)}_{i,j}+(1-\lambda)A^{(q)}_{i,j}
\end{align}
From the triangle inequality we obtain
\begin{align}
\|A^{(\lambda p+(1-\lambda)q)}\|&\le
\lambda \|A^{(p)}\|+(1-\lambda)\|A^{(q)}\|\le 1\;.
\end{align}
This means that if one verifies that for a set of points $p_1,\ldots,p_N$ the matrices are contractions, then for all matrices in the convex hull $p\in \mathrm{conv}\{p_1,\ldots,p_N\}$ the matrix $A^{(p)}$ will be a contraction.\hfill$\blacksquare$

In particular this means that if one checks that contractions are vertices of some $Q=\times_{k=1}^{n^2}[a'_k,b'_k]\subseteq P$, then no matrix inside $Q$ will have a norm larger than $1$. 

\section*{D. Unitary dilations}
To find a complete theory for BSM mixing matrices we need to find a matrix that has a nonunitary $V$ as a principal submatrix and is unitary.
In 1950 Halmos \cite{Halmos:1950} noticed that any contraction $A$ acting on a Hilbert space $H$ can be dilated to a unitary operator which acts on $H \oplus H$ space by
\begin{equation}
\label{eq:halmos}
U=
\left(
\begin{array}{cc}
A & (I-AA^{\dag})^{1/2} \\
(I-A^{\dag}A)^{1/2} & -A^{\dag} 
\end{array}
\right).
\end{equation} 
A few years later, Sz.- Nagy \cite{Sz.-Nagy:1953} generalized this idea.
In the Halmos construction we see that for an $n \times n$ matrix $A$ its unitary dilation $U$ will have dimension $2n \times 2n$. 
There exists a further theorem \cite{Thomson:1982} which allows us to dilate a contraction to a unitary matrix of possibly lower dimension than $2n$, yet some additional conditions must be satisfied.
\begin{theorem}[\cite{Thomson:1982}]
\label{th:min}
 A matrix $A\in M_{k\times k}$ is a principal submatrix of a unitary $U\in M_{n\times n}$ iff A is a contraction and $m=\mathrm{rank}(I-A^{\dagger}A)\leq\mathrm{min}\{k,n-k\}$. 
\end{theorem}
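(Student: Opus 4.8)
The plan is to reduce the whole statement to a single block decomposition of $U$ together with a careful reading of the positive semidefinite matrix $I-A^\dagger A$. Since a principal submatrix is obtained by deleting a set of indices \emph{and} the corresponding rows and columns, conjugating $U$ by a suitable permutation matrix $\Pi$ (for which $\Pi^{-1}=\Pi^\dagger$, so unitarity is preserved) lets me assume without loss of generality that $A$ occupies the top-left corner, i.e. $U=\begin{pmatrix} A & B \\ C & D\end{pmatrix}$ with $A\in M_{k\times k}$ and $C\in M_{(n-k)\times k}$. All computations are then carried out on this block form.

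For the necessity direction, the contraction property of $A$ is immediate from Proposition~\ref{prop:sub} (a submatrix of a unitary matrix has operator norm at most $1$). For the rank bound I would read off the top-left block of the identity $U^\dagger U=I_n$, which yields $A^\dagger A+C^\dagger C=I_k$, hence $I_k-A^\dagger A=C^\dagger C$. Since $\mathrm{rank}(C^\dagger C)=\mathrm{rank}(C)$ and $C$ is an $(n-k)\times k$ matrix, I conclude $m=\mathrm{rank}(I_k-A^\dagger A)=\mathrm{rank}(C)\le\min\{k,n-k\}$, as required.

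For the sufficiency direction I would run this computation in reverse. Because $A$ is a contraction, $I_k-A^\dagger A\succeq 0$ with rank $m$; writing its spectral decomposition $I_k-A^\dagger A=\sum_{i=1}^m s_i^2\,v_iv_i^\dagger$ with orthonormal eigenvectors $v_i$ and $s_i>0$, I would construct $C=\sum_{i=1}^m s_i\,e_iv_i^\dagger$, where $e_1,\dots,e_m$ are orthonormal vectors in $\mathbb{C}^{n-k}$. This is exactly the step that consumes the hypothesis $m\le n-k$: we need $m$ orthonormal vectors to be available in $\mathbb{C}^{n-k}$. A direct check gives $C^\dagger C=I_k-A^\dagger A$, so the first $k$ columns $\begin{pmatrix} A\\ C\end{pmatrix}$ satisfy $A^\dagger A+C^\dagger C=I_k$ and therefore form an isometry $\mathbb{C}^k\to\mathbb{C}^n$.

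The final step is to complete these $k$ orthonormal columns to an orthonormal basis of $\mathbb{C}^n$ by adjoining $n-k$ further columns $\begin{pmatrix} B\\ D\end{pmatrix}$; this is a routine Gram-Schmidt extension, always possible since $k\le n$, and a square matrix with orthonormal columns is automatically unitary, so the resulting $U$ is unitary with $A$ as its top-left principal submatrix. I expect the only real obstacle to be the existence of $C$ of the prescribed shape with Gram matrix $I_k-A^\dagger A$: the constraint $m\le n-k$ is precisely what makes this construction possible, whereas the complementary bound $m\le k$ is automatic because $I_k-A^\dagger A$ is a $k\times k$ matrix. Everything else---the block expansion of unitarity and the basis completion---is mechanical. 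As an aside, the same conclusion can be reached by feeding the singular value decomposition of $A$ into the cosine-sine decomposition used in the main text, but the direct construction above is self-contained.
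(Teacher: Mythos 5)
Your proof is correct. It is worth pointing out that the paper itself contains no proof of Theorem~\ref{th:min}: the result is imported verbatim from Thomson and Kuo \cite{Thomson:1982}, and the constructive apparatus the paper actually develops (Theorem~\ref{th:2} and Corollary~\ref{cor:2} in Appendix~D, applied in the main text) runs through the cosine--sine decomposition of a partitioned unitary matrix. Your route is genuinely different and more elementary: necessity falls out of the block relation $A^{\dagger}A+C^{\dagger}C=I_{k}$ from $U^{\dagger}U=I_n$ (after the permutation-conjugation reduction, which correctly preserves both unitarity and the notion of principal submatrix), with $m=\mathrm{rank}(C^{\dagger}C)=\mathrm{rank}(C)\leq\min\{k,n-k\}$ forced by the shape of $C$; sufficiency is the reverse construction, where you rightly isolate $m\leq n-k$ as the only binding constraint (it supplies the $m$ orthonormal vectors $e_{1},\dots,e_{m}\in\mathbb{C}^{n-k}$ needed for $C$ with Gram matrix $I_{k}-A^{\dagger}A$, while $m\leq k$ is vacuous for a $k\times k$ Gram matrix), and the Gram--Schmidt completion of the resulting isometry $\bigl(\begin{smallmatrix}A\\ C\end{smallmatrix}\bigr)$ to a unitary is indeed routine. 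Comparing the two approaches: your argument proves existence and pins down the minimal dimension $n=k+m$ (the content of Corollary~\ref{cor:min}) with bare linear algebra and no decomposition theory, whereas the paper's CS-decomposition route buys an explicit \emph{parametrization} of all minimal dilations through the free unitary blocks $W_{2},Q_{2}$, which is precisely what the main text exploits in its numerical $3+2$ example; your closing aside correctly identifies that feeding the SVD of $A$ into the CS form recovers that construction.
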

Recall that the rank of a matrix can be defined as the number of its nonzero singular values.
We use this theorem to show that $m$ is optimal.
\begin{cor}
\label{cor:min}
 Let $A,U$ be as above and $m=\mathrm{rank}(I-A^{\dagger}A)$. Then the minimal dimension of $U$ is $n=k+m$.
\end{cor}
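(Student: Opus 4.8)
The plan is to read off the minimal dimension directly from the characterization in Theorem~\ref{th:min}. That theorem tells us that $A$ occurs as a principal submatrix of some unitary $U\in M_{n\times n}$ precisely when $A$ is a contraction and $m=\mathrm{rank}(I-A^{\dagger}A)\le\min\{k,n-k\}$. Since $A$ is assumed to be a contraction, the only remaining content is the inequality $m\le\min\{k,n-k\}$, so I would simply determine the smallest $n$ for which it can hold.

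First I would observe that the constraint $m\le k$ is automatic: $I-A^{\dagger}A$ is a $k\times k$ matrix, so its rank cannot exceed $k$. Hence the term $\min\{k,n-k\}$ is never limited by its first argument in a way that adds information, and the binding requirement reduces to $m\le n-k$, equivalently $n\ge k+m$. This already yields the lower bound on the dimension of any admissible dilation.

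Next I would confirm that the bound $n=k+m$ is actually attained. Setting $n=k+m$ gives $\min\{k,n-k\}=\min\{k,m\}=m$, where the last equality uses $m\le k$ from the previous step. Thus the condition $m\le\min\{k,n-k\}$ becomes $m\le m$, which holds, so by Theorem~\ref{th:min} a unitary $U$ of dimension $k+m$ having $A$ as a principal submatrix exists. Combined with the lower bound $n\ge k+m$, this shows that $n=k+m$ is the minimal dimension, as claimed.

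I do not expect any genuine obstacle here, since the corollary is a direct bookkeeping consequence of Theorem~\ref{th:min}; the only point requiring care is to notice that the constraint $m\le k$ carries no information, as it holds for every $k\times k$ matrix, so that the full characterization collapses to the single inequality $n\ge k+m$ whose extremal case is realizable.
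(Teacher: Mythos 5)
Your proposal is correct and follows essentially the same route as the paper: both read the lower bound $n\ge k+m$ directly off the inequality $m\le\min\{k,n-k\}$ of Theorem~\ref{th:min} (the paper merely phrases this as a one-line contradiction, assuming $n<k+m$). You go slightly further by also checking attainability of $n=k+m$ via the ``if'' direction of the theorem, using the automatic bound $m\le k$ to collapse $\min\{k,m\}$ to $m$ -- a step the paper's proof leaves implicit, relying instead on the explicit CS-decomposition construction in the main text to realize the minimal dilation.
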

Proof. Suppose $n<k+m$. From Theorem \ref{th:min} we have $m\leq\mathrm{min}\{k,n-k\}$, and hence $m\leq n-k$ in particular. Thus $n\geq k+m$, which contradicts the assumption.
 \hfill$\blacksquare$
 
In the main text we construct the minimal extension and use the fact that $\mathrm{rank}(I-A^{\dagger}A)$ is equal to the number of singular values of $A$ strictly less than one, which is a direct consequence of a rank definition given above.
The construction is achieved through the CS decomposition of unitary matrices.
In \cite{Allen06matrixdilations} it has been shown how the Halmos construction (\ref{eq:halmos}) is a particular example of the CS decomposition. This construction in its generality allows for dilations of dimension determined by Corollary \ref{cor:min}. Again, singular values play a crucial role here.
 \\
\begin{theorem}[\cite{Allen06matrixdilations}]
\label{th:2}
Let the unitary matrix $U \in M_{(n+m) \times (n+m)}$ be partitioned as
\begin{equation}
U=
\begin{blockarray}{ccc}
n & m & \\ 
\begin{block}{(cc)c}
U_{11} & U_{12} & \ n \\
U_{21} & U_{22} & \ m \\
\end{block}
\end{blockarray}
\end{equation}
If $m \geq n$, then there are unitary matrices $W_{1}, Q_{1} \in M_{n \times n}$ and unitary matrices $W_{2}, Q_{2} \in M_{m \times m}$ such that
\begin{equation}
\begin{split}
&\left(
\begin{array}{cc}
U_{11} & U_{12} \\
U_{21} & U_{22}
\end{array}
\right)= \\
&\left(
\begin{array}{cc}
W_{1} & 0 \\
0 & W_{2}
\end{array}
\right)
\left(
\begin{array}{c|cc}
C & -S & 0 \\ \hline
S & C & 0 \\
0 & 0 & I_{m-n}
\end{array}
\right)
\left(
\begin{array}{cc}
Q_{1}^{\dag} & 0 \\
0 & Q_{2}^{\dag}
\end{array}
\right),
\end{split}
\end{equation}
where $C \geq 0$ and $S \geq 0$ are diagonal matrices satisfying $C^{2} + S^{2}=I_{n}$.
\end{theorem}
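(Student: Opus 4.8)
The plan is to bootstrap the entire decomposition from a single singular value decomposition of the top-left block $U_{11}$. Since $U_{11}$ is an $n\times n$ principal submatrix of the unitary $U$, Proposition~\ref{prop:sub} (and its Corollary) guarantees it is a contraction, so I would first write $U_{11}=W_1 C Q_1^\dagger$ with $W_1,Q_1\in M_{n\times n}$ unitary and $C=\mathrm{diag}(c_1,\dots,c_n)$, $0\le c_i\le 1$. This already fixes the top-left block of the claimed canonical form and defines the cosines; I then set $S=(I_n-C^2)^{1/2}=\mathrm{diag}(s_1,\dots,s_n)$, so that $C^2+S^2=I_n$ holds by construction.

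Next I would exploit the orthonormality relations encoded in $U^\dagger U=I$ and $UU^\dagger=I$ to pin down the off-diagonal blocks. Column orthonormality of the first $n$ columns gives $U_{11}^\dagger U_{11}+U_{21}^\dagger U_{21}=I_n$, whence $(U_{21}Q_1)^\dagger(U_{21}Q_1)=Q_1^\dagger(I_n-U_{11}^\dagger U_{11})Q_1=S^2$. Thus the $m\times n$ matrix $U_{21}Q_1$ has Gram matrix $S^2$ and therefore factors as $U_{21}Q_1=W_2\bigl(\begin{smallmatrix}S\\0\end{smallmatrix}\bigr)$ for a suitable $W_2\in M_{m\times m}$; here the hypothesis $m\ge n$ is exactly what is needed, since it leaves $m-n$ extra dimensions into which the isometry determined on the range of $S$ can be completed to a \emph{full} unitary $W_2$. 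An identical argument applied to the row relation $U_{11}U_{11}^\dagger+U_{12}U_{12}^\dagger=I_n$, which yields $U_{12}U_{12}^\dagger=W_1 S^2 W_1^\dagger$, produces $Q_2\in M_{m\times m}$ with $W_1^\dagger U_{12}Q_2=(-S\mid 0)$, the minus sign being a free phase convention absorbed into $Q_2$ and the \emph{same} $S$ appearing.

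With $W_1,W_2,Q_1,Q_2$ in hand I would pass to $\widehat U=(W_1\oplus W_2)^\dagger\,U\,(Q_1\oplus Q_2)$, which is unitary and whose $(1,1)$, $(2,1)$ and $(1,2)$ blocks already equal $C$, $\bigl(\begin{smallmatrix}S\\0\end{smallmatrix}\bigr)$ and $(-S\mid 0)$. The last step is to read off the $(2,2)$ block $X$ from the residual orthogonality constraints: the relation $\widehat U_{11}^\dagger\widehat U_{12}+\widehat U_{21}^\dagger\widehat U_{22}=0$ forces $(S\mid 0)\,X=(CS\mid 0)$, which recovers $C$ in the top-left $n\times n$ corner of $X$ and zeros elsewhere in its first $n$ rows, while the remaining mixed relations together with unitarity of $\widehat U$ fix the complementary block to $I_{m-n}$. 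Crucially, this is where the cosine matrix $C$ reappears in the lower-right corner rather than some independent matrix, completing the canonical form.

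I expect the main obstacle to be the \emph{degenerate} singular values. Where $s_i=0$ (equivalently $c_i=1$) the factorization of $U_{21}Q_1$ does not determine the corresponding columns of $W_2$, and where $C$ has repeated entries the SVD leaves rotational freedom in $W_1,Q_1$; one must spend this freedom \emph{consistently} across all four blocks so that every block diagonalizes against the \emph{same} pair of bases and the same $C,S$. Managing this bookkeeping carefully --- rather than any single inequality --- is the genuine content of the proof, while the hypothesis $m\ge n$ enters only to supply enough room for the $I_{m-n}$ padding.
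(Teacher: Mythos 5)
The paper offers no proof to compare against: Theorem~\ref{th:2} is imported verbatim from the Allen--Arceo report \cite{Allen06matrixdilations}, so your argument can only be judged on its own merits. On those merits it is the standard proof of the CS decomposition and is essentially sound: the SVD of the contraction $U_{11}$ fixes $W_1,Q_1,C$; the column relation $U_{11}^\dagger U_{11}+U_{21}^\dagger U_{21}=I_n$ gives $(U_{21}Q_1)^\dagger(U_{21}Q_1)=S^2$, and since at most $n\le m$ of the $s_i$ are nonzero, the partial isometry on the range of $S$ completes to a full unitary $W_2$ with $U_{21}Q_1=W_2\bigl(\begin{smallmatrix}S\\0\end{smallmatrix}\bigr)$; the symmetric row relation produces $Q_2$ with $W_1^\dagger U_{12}Q_2=(-S\mid 0)$, the same $S$ appearing automatically because it is determined by the one SVD. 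Your identification of the degenerate singular values as the real bookkeeping is also correct.

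Two points where the plan as written overstates what unitarity delivers, both repairable by the freedom you yourself flagged. First, even with $S$ invertible, the relations $(S\mid 0)X=(CS\mid 0)$ and $X_{21}'S=0$ only force $X=\mathrm{diag}(C,X_{22}')$ with $X_{22}'$ an \emph{arbitrary} $(m-n)\times(m-n)$ unitary --- unitarity of $\widehat U$ does not ``fix the complementary block to $I_{m-n}$.'' You must spend the undetermined last $m-n$ columns of $W_2$ explicitly, replacing $W_2\mapsto W_2(I_n\oplus X_{22}')$, and check (it works) that this leaves $\widehat U_{21}=\bigl(\begin{smallmatrix}S\\0\end{smallmatrix}\bigr)$ and $\widehat U_{11}$ untouched. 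Second, where $s_i=0$ (hence $c_i=1$), the relation $(S\mid 0)X=(CS\mid 0)$ says nothing about row $i$ of $X$; there the orthogonality constraints only force the principal sub-block of $X$ on the index set $\{i: s_i=0\}\cup\{n+1,\dots,m\}$ to be unitary, and the same absorption trick --- using the columns of $W_2$ (or $Q_2$) that were left undetermined on exactly those coordinates --- normalizes it to the identity, which is consistent with the target $\mathrm{diag}(C,I_{m-n})$ precisely because $c_i=1$ at those indices. Repeated singular values, by contrast, require no consistency management at all: once one SVD of $U_{11}$ is fixed, every subsequent construction is made relative to that fixed pair of bases. With these two absorption steps written out, your outline is a complete and correct proof.
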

If $n \geq m$, then it is possible to parametrize a unitary dilation of the smallest size.
\begin{cor}\label{cor:2}
The parametrization of the unitary dilation of smallest size is given by
\begin{equation}
\begin{split}
&
\left(
\begin{array}{cc}
U_{11} & U_{12} \\
U_{21} & U_{22}\
\end{array}
\right)=\\
&
\left(
\begin{array}{cc}
W_{1} & 0 \\
0 & W_{2} 
\end{array}
\right)
\left(
\begin{array}{cc|c}
I_{r} & 0 & 0 \\
0 & C & -S \\ \hline
0 & S & C
\end{array}
\right)
\left(
\begin{array}{cc}
Q_{1}^{\dag} & 0 \\
0 & Q_{2}^{\dag}
\end{array}
\right),
\end{split}
\end{equation}
where $r=n-m$ is the number of singular values equal to 1 and $C=diag(\cos \theta_{1},...,\cos \theta_{m})$, with $\vert \cos \theta_{i} \vert <1$ for $i=1,...,m$.
\end{cor}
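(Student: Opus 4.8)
The plan is to derive the smallest-size dilation as a direct specialization of Theorem \ref{th:2}, exploiting the symmetry between the roles of $n$ and $m$ in the partitioned unitary. In Theorem \ref{th:2} the block to be dilated is the \emph{lower-right} $m\times m$ block $U_{22}$, and the hypothesis $m\ge n$ guarantees enough room for the singular values of the \emph{smaller} block to be accommodated. Here the contraction we wish to dilate is the \emph{upper-left} block $U_{11}\in M_{n\times n}$, which is the matrix $V'$ in the main text, and by Corollary \ref{cor:min} the minimal ambient dimension is $n+m$ where $m=\mathrm{rank}(I-U_{11}^\dagger U_{11})$ is the number of singular values of $U_{11}$ strictly less than one. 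First I would observe that in this minimal setting $n\ge m$, so the singular values of $U_{11}$ split into exactly $r=n-m$ values equal to $1$ and $m$ values strictly less than $1$, justifying the stated form $C=\mathrm{diag}(\cos\theta_1,\dots,\cos\theta_m)$ with $|\cos\theta_i|<1$.

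The key step is to reduce to Theorem \ref{th:2} by relabeling the blocks so that the \emph{larger} index plays the role required by that theorem. Concretely, I would apply Theorem \ref{th:2} to the unitary $U$ viewed with its blocks permuted (equivalently, conjugate by the swap of the two coordinate groups), so that the block of size $n$ now sits in the position the theorem treats as the $m\times m$ corner. Since $n\ge m$, the hypothesis of Theorem \ref{th:2} is met after this relabeling, and the theorem furnishes block-diagonal unitaries $W=W_1\oplus W_2$ and $Q=Q_1\oplus Q_2$ together with nonnegative diagonal $C,S$ satisfying $C^2+S^2=I_m$ such that the middle CS factor takes the canonical form. Reversing the relabeling moves the identity block $I_r$ into the top-left slot of the $n\times n$ corner, producing exactly the displayed middle matrix
\[
\left(
\begin{array}{cc|c}
I_{r} & 0 & 0 \\
0 & C & -S \\ \hline
0 & S & C
\end{array}
\right).
\]

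Finally I would verify consistency: computing the upper-left $n\times n$ block of $W(\,\cdots)Q^\dagger$ recovers $W_1(I_r\oplus C)Q_1^\dagger$, whose singular values are precisely $\{1,\dots,1,\cos\theta_1,\dots,\cos\theta_m\}$, matching the singular value decomposition of $U_{11}$; this fixes $C$ and hence $S=\sqrt{I_m-C^2}$, while $W_2,Q_2$ remain free unitaries, as used in the explicit construction in the main text. The one point demanding care — and the main obstacle — is bookkeeping the block permutation: one must check that swapping coordinate groups commutes appropriately with the block-diagonal structure of $W$ and $Q$, so that after relabeling back one still obtains genuinely block-diagonal conjugating unitaries rather than a scrambled form. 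Once that is confirmed, the corollary follows immediately, since $r=n-m$ counts the unit singular values and the strict inequality $|\cos\theta_i|<1$ is exactly the statement that the remaining $m$ singular values are those strictly below one.
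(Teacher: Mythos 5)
Your derivation is sound, and it is worth noting that the paper itself supplies \emph{no} proof of Corollary \ref{cor:2}: the statement is presented as the $n\geq m$ counterpart of Theorem \ref{th:2} with the CS-decomposition reference cited, and the main text only uses the formula constructively (assembling $U=(W_1\oplus W_2)\,\Sigma_{CS}\,(Q_1\oplus Q_2)^\dagger$ from the SVD of $V'$ and verifying the outcome), so your swap-and-relabel argument makes explicit exactly the symmetry the paper leaves implicit. The one place where your sketch stops short of the displayed form is the bookkeeping you yourself flag: conjugating Theorem \ref{th:2} applied to $PUP^\dagger$ (with $P$ exchanging the two coordinate groups) back by $P$ does \emph{not} immediately produce the stated middle factor; a direct computation gives
\begin{equation*}
P^\dagger\!\left(
\begin{array}{cc|c}
\tilde C & -\tilde S & 0 \\
\tilde S & \tilde C & 0 \\ \hline
0 & 0 & I_{n-m}
\end{array}
\right)\!P=
\left(
\begin{array}{cc|c}
\tilde C & 0 & \tilde S \\
0 & I_{n-m} & 0 \\ \hline
-\tilde S & 0 & \tilde C
\end{array}
\right),
\end{equation*}
i.e., the identity block lands in the middle slot and the sign pattern of $S$ comes out transposed. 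Both defects are repairable: absorb into $W_1,Q_1$ the permutation of the first $n$ coordinates that reorders the blocks $(\tilde C, I_{n-m})\mapsto(I_{n-m},\tilde C)$, and absorb a factor $-I_m$ into $W_2,Q_2$, which flips the off-diagonal signs while fixing the diagonal blocks; since all of these are block diagonal with respect to the $(n,m)$ partition, the conjugating unitaries retain the required form, completing the proof. Your appeal to Corollary \ref{cor:min} for minimality and the identification of $r=n-m$ with the count of unit singular values (via $m=\mathrm{rank}(I-U_{11}^\dagger U_{11})$ and positive semidefiniteness of $I-U_{11}^\dagger U_{11}$) are both correct. One small inaccuracy in your framing: in Theorem \ref{th:2} it is $U_{11}=W_1 C Q_1^\dagger$ that carries the generic singular values, while $U_{22}=W_2(C\oplus I_{m-n})Q_2^\dagger$ contains the forced identity block; your description reverses these roles, though this does not affect the argument since you only invoke the theorem as a decomposition of the full unitary.
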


\section{E. BSM parametrizations of neutrino mixings and contractions}

There exist three different matrix factorizations that decompose a matrix into a product of two matrices of which one is unitary, namely \cite{books/daglib/0019187, 2017arXiv170400122G},
\begin{enumerate}
\item Polar decomposition,
\item QR decomposition,
\item Mostow decomposition.
\end{enumerate}
The first two are used frequently in neutrino physics in the context of parametrization of nonunitarity effects in the neutrino mixing matrix. These are the polar decomposition and a modified version of the QR decomposition. Thus, let us take a closer look at these two parametrizations. The polar decomposition factorizes a given square matrix $A$ into the following product,
\begin{equation}
A=PU,
\end{equation}
where matrix $P$ is a positive semidefinite Hermitian matrix and $U$ is a unitary matrix. The polar factor $P$ is uniquely determined and is given by $\sqrt{AA^{\dag}}$, while the unitary part is also uniquely determined if the initial matrix is nonsingular.

To our knowledge an application of the polar decomposition to parametrize a deviation from unitarity in the neutrino sector appears for the first time in \cite{Antusch:2006vwa}. There, the polar factor is further decomposed in the following way, 
\begin{equation}\label{eta1}
P=I-\eta,
\end{equation}
where a matrix $\eta$ describes the deviation from unitarity of the neutrino mixing matrix. As we recall from the main text, physical mixing matrices must be contractions, i.e., matrices with spectral norm less than or equal to one or equivalently with the largest singular value less than or equal to one. Let us notice that in general the polar decomposition does not provide this property. To see this, let us look at a simple example, where we take the matrix $\eta$ in a simple diagonal form \footnote{Since by the unitarily invariance of the norm the unitary part is irrelevant here, we can focus only on the polar factor given by \eqref{eta1}.}
\begin{equation}
\eta=
\left(
\begin{array}{cc}
\epsilon & 0 \\
0 & -\epsilon
\end{array}
\right),
\end{equation}
where $0<\epsilon \leq 1$.

Observe that this results in a positive semidefinite matrix $P=I-\eta$, which is necessary for a polar factor. However, such $P$ is not a contraction since one singular value will be always larger than one, independently of how small $\epsilon$ is.

Recently, the polar factor in the form of \eqref{eta1} was identified with a matrix $I-\frac{\Theta \Theta^{\dag}}{2}$, which arises in the context of the complete unitary mixing matrix \cite{Antusch:2009pm} (for a similar construction see also \cite{Blennow:2011vn, Agostinho:2017wfs}). Thus in a scenario such that the complete unitary mixing matrix is considered, the polar factor is by definition a contraction. In this approach, to ensure that the polar factor $I-\eta$ is a contraction, a necessary condition for the matrix $\eta$ follows in a form of a positive semidefinite matrix. Using the fact that the operator norm is unitarily invariant, it can be shown that for sufficiently small entries of the matrix $\eta$ also the inverse is true; i.e., if the matrix $\eta$ is positive semidefinite, then $P=I-\eta$ must be a contraction. Scenarios that employ such unitarity-breaking constructions are usually called top-down approaches.

The second of the currently used factorizations in neutrino physics is the QR decomposition. It factorizes a given matrix into a product of a unitary matrix $Q$ and an upper triangular matrix $R$ and was proposed as a parametrization of the neutrino mixing matrix in \cite{Xing:2007zj, Xing:2011ur}. For this purpose a modified version of the QR factorization is used, namely, the LQ decomposition, where $L$ corresponds to a lower triangular matrix and $Q$ is a unitary matrix. Moreover, in the context of the neutrino mixing, this lower triangular matrix is further split into the following form,
\begin{equation}
L=I-\alpha,
\end{equation}
where the matrix $\alpha$ is a lower triangular and describes a deviation from unitarity of the $U_{PMNS}$. 

Recently, a correspondence between the polar and QR parametrizations in the case of neutrino mixing was found \cite{Blennow:2016jkn}. 

In the end let us look briefly at the last factorization, i.e., Mostow decomposition. It decomposes any nonsingular complex matrix $A$ in the following way,
\begin{equation}
A=Ue^{iK}e^{S},
\end{equation}   
where $U$ is a unitary matrix, $K$ is a real skew symmetric matrix, and $S$ corresponds to a real symmetric matrix.

To this point we discussed matrix decompositions commonly used to parametrize a possible deviation from unitarity of the mixing matrix. Currently they are mostly used in top-down analyses \cite{Antusch:2006vwa,FernandezMartinez:2007ms,Xing:2007zj,Antusch:2009pm,Blennow:2011vn,Xing:2011ur,Escrihuela:2015wra,Blennow:2016jkn,Agostinho:2017wfs}, which means that they are considered as a part of a complete unitary matrix each time. As we have shown, such an approach trivially ensures the contraction property for these matrices. 
Let us note that top-down parametrizations are based on general treatment of unitarity breaking effects described by matrix factorization, and there is a lack of exact description based on entrywise parametrization of the mixing matrix, which would fulfill automatically the contraction property. Such a construction would be very useful. So far, parametrizations which are constructed fulfill a condition of contractions involving a general $I-\frac{\Theta \Theta^{\dag}}{2}$ representation of the matrix $\eta$, parametrizing a matrix $\Theta$ in such a way that $\frac{\Theta \Theta^{\dag}}{2}$ will fit into currently known limits on $\eta$. 

Actually in our strategy we come back to the bottom-up scenario,
as our analysis starts from the present state of knowledge on $\upmns$ mixing data in the form of an interval matrix, and we examine directly whether the matrices within are physically meaningful (i.e., are contractions). An extension of this idea allows us to define the complete region of physical mixing matrices as a convex hull of $\upmns$ matrices, which ensures that any physical mixing matrix can be constructed as a convex combination of $\upmns$ matrices. 
 
Now, let us emphasize the relation of our approach to the polar decomposition. In our analysis we use singular values as an indicator of whether a given matrix is a contraction. However, it is known that eigenvalues of the polar factor, which follows from the definition, are equal to singular values of an initial matrix. Thus from that perspective a polar decomposition can be treated as a compact version of singular value decomposition. Nevertheless, from a numerical analysis perspective, singular value decomposition algorithms are more natural, since they arise from the eigenvalue decomposition of matrices $AA^{\dag}$ and $A^{\dag}A$. Thus in most cases, in order to obtain an algorithm for a polar decomposition, we have to translate algorithms for the singular value decomposition.


\begin{thebibliography}{10}%
\makeatletter
\providecommand \@ifxundefined [1]{%
 \ifx #1\undefined \expandafter \@firstoftwo
 \else \expandafter \@secondoftwo
\fi
}%
\providecommand \@ifnum [1]{%
 \ifnum #1\expandafter \@firstoftwo
 \else \expandafter \@secondoftwo
\fi
}%
\providecommand \enquote [1]{``#1''}%
\providecommand \bibnamefont  [1]{#1}%
\providecommand \bibfnamefont [1]{#1}%
\providecommand \citenamefont [1]{#1}%
\providecommand\href[0]{\@sanitize\@href}%
\providecommand\@href[1]{\endgroup\@@startlink{#1}\endgroup\@@href}%
\providecommand\@@href[1]{#1\@@endlink}%
\providecommand \@sanitize [0]{\begingroup\catcode`\&12\catcode`\#12\relax}%
\@ifxundefined \pdfoutput {\@firstoftwo}{%
 \@ifnum{\z@=\pdfoutput}{\@firstoftwo}{\@secondoftwo}%
}{%
 \providecommand\@@startlink[1]{\leavevmode}%
 \providecommand\@@endlink[0]{}%
}{%
 \providecommand\@@startlink[1]{%
  \leavevmode
  \pdfstartlink
   attr{/Border[0 0 1 ]/H/I/C[0 1 1]}%
   user{/Subtype/Link/A<</Type/Action/S/URI/URI(#1)>>}%
  \relax
 }%
 \providecommand\@@endlink[0]{\pdfendlink}%
}%
\providecommand \url  [0]{\begingroup\@sanitize \@url }%
\providecommand \@url [1]{\endgroup\@href {#1}{\urlprefix}}%
\providecommand \urlprefix [0]{URL }%
\providecommand \Eprint[0]{\href }%
\@ifxundefined \urlstyle {%
  \providecommand \doi [1]{doi:\discretionary{}{}{}#1}%
}{%
  \providecommand \doi [0]{doi:\discretionary{}{}{}\begingroup
  \urlstyle{rm}\Url }%
}%
\providecommand \doibase [0]{http://dx.doi.org/}%
\providecommand \Doi[1]{\href{\doibase#1}}%
\providecommand \bibAnnote [3]{%
  \BibitemShut{#1}%
  \begin{quotation}\noindent
    \textsc{Key:}\ #2\\\textsc{Annotation:}\ #3%
  \end{quotation}%
}%
\providecommand \bibAnnoteFile [2]{%
  \IfFileExists{#2}{\bibAnnote {#1} {#2} {\input{#2}}}{}%
}%
\providecommand \typeout [0]{\immediate \write \m@ne }%
\providecommand \selectlanguage [0]{\@gobble}%
\providecommand \bibinfo [0]{\@secondoftwo}%
\providecommand \bibfield [0]{\@secondoftwo}%
\providecommand \translation [1]{[#1]}%
\providecommand \BibitemOpen[0]{}%
\providecommand \bibitemStop [0]{}%
\providecommand \bibitemNoStop [0]{.\EOS\space}%
\providecommand \EOS [0]{\spacefactor3000\relax}%
\providecommand \BibitemShut [1]{\csname bibitem#1\endcsname}%
\bibitem{Bahcall:1976zz}%
  \BibitemOpen
  \bibfield{author}{%
  \bibinfo {author} {\bibfnamefont{J.~N.}\ \bibnamefont{Bahcall}}\ and\
  \bibinfo {author} {\bibfnamefont{R.}~\bibnamefont{Davis}},\ }%
  \bibfield{journal}{%
  \Doi{10.1126/science.191.4224.264}{\bibinfo {journal} {Science}}\ }%
  \textbf{\bibinfo {volume} {191}},\ \bibinfo {pages} {264} (\bibinfo {year}
  {1976})%
  \bibAnnoteFile{NoStop}{Bahcall:1976zz}%
\bibitem{Giunti:2007ry}%
  \BibitemOpen
  \bibfield{author}{%
  \bibinfo {author} {\bibfnamefont{C.}~\bibnamefont{Giunti}}\ and\ \bibinfo
  {author} {\bibfnamefont{C.~W.}\ \bibnamefont{Kim}},\ }%
  \emph{\bibinfo {title} {{Fundamentals of Neutrino Physics and
  Astrophysics}}}\ (\bibinfo {year} {Oxford, UK: Univ. Pr., 2007})%
  \bibAnnoteFile{NoStop}{Giunti:2007ry}%
\bibitem{Mohapatra:2004}%
  \BibitemOpen
  \bibfield{author}{%
  \bibinfo {author} {\bibfnamefont{R.~N.}\ \bibnamefont{Mohapatra}}\ and\
  \bibinfo {author} {\bibfnamefont{P.~B.}\ \bibnamefont{Pal}},\ }%
  \emph{\bibinfo {title} {Massive Neutrinos in Physics and Astrophysics}}\
  (\bibinfo {publisher} {World Scientific},\ \bibinfo {year} {2004})%
  \bibAnnoteFile{NoStop}{Mohapatra:2004}%
\bibitem{unb}%
  \BibitemOpen
  \bibinfo {note} {Stefano Gariazzo, Carlo Giunti, Marco Laveder,
  http://www.nu.to.infn.it/}%
  \bibAnnoteFile{NoStop}{unb}%
\bibitem{Patrignani:2016xqp}%
  \BibitemOpen
  \bibfield{author}{%
  \bibinfo {author} {\bibfnamefont{C.}~\bibnamefont{Patrignani}} \emph{et~al.}
  (\bibinfo {collaboration} {Particle Data Group}),\ }%
  \bibfield{journal}{%
  \Doi{10.1088/1674-1137/40/10/100001}{\bibinfo {journal} {Chin. Phys.}}\ }%
  \textbf{\bibinfo {volume} {C40}},\ \bibinfo {pages} {100001} (\bibinfo {year}
  {2016})%
  \bibAnnoteFile{NoStop}{Patrignani:2016xqp}%
\bibitem{Lesgourgues-Mangano-Miele-Pastor-2013}%
  \BibitemOpen
  \bibfield{author}{%
  \bibinfo {author} {\bibfnamefont{G.~M.}\ \bibnamefont{J.~Lesgourgues},
  \bibfnamefont{G.~Mangano}}\ and\ \bibinfo {author}
  {\bibfnamefont{S.}~\bibnamefont{Pastor}},\ }%
  \emph{\bibinfo {title} {Neutrino Cosmology}}\ (\bibinfo {publisher}
  {Cambridge University Press},\ \bibinfo {year} {2013})%
  \bibAnnoteFile{NoStop}{Lesgourgues-Mangano-Miele-Pastor-2013}%
\bibitem{Rasmussen:2017ert}%
  \BibitemOpen
  \bibfield{author}{%
  \bibinfo {author} {\bibfnamefont{R.~W.}\ \bibnamefont{Rasmussen}}, \bibinfo
  {author} {\bibfnamefont{L.}~\bibnamefont{Lechner}}, \bibinfo {author}
  {\bibfnamefont{M.}~\bibnamefont{Ackermann}}, \bibinfo {author}
  {\bibfnamefont{M.}~\bibnamefont{Kowalski}},\ and\ \bibinfo {author}
  {\bibfnamefont{W.}~\bibnamefont{Winter}}}%
   (\bibinfo {year} {2017}),\
  \Eprint{http://arxiv.org/abs/1707.07684}{arXiv:1707.07684 [hep-ph]}%
  \bibAnnoteFile{NoStop}{Rasmussen:2017ert}%
\bibitem{Minkowski:1977sc}%
  \BibitemOpen
  \bibfield{author}{%
  \bibinfo {author} {\bibfnamefont{P.}~\bibnamefont{Minkowski}},\ }%
  \bibfield{journal}{%
  \Doi{10.1016/0370-2693(77)90435-X}{\bibinfo {journal} {Phys. Lett.}}\ }%
  \textbf{\bibinfo {volume} {B67}},\ \bibinfo {pages} {421} (\bibinfo {year}
  {1977})%
  \bibAnnoteFile{NoStop}{Minkowski:1977sc}%
\bibitem{Yanagida:1979as}%
  \BibitemOpen
  \bibfield{author}{%
  \bibinfo {author} {\bibfnamefont{T.}~\bibnamefont{Yanagida}},\ }%
  \bibfield{journal}{%
  \bibinfo {journal} {Conf. Proc.}\ }%
  \textbf{\bibinfo {volume} {C7902131}},\ \bibinfo {pages} {95} (\bibinfo
  {year} {1979})%
  \bibAnnoteFile{NoStop}{Yanagida:1979as}%
\bibitem{GellMann:1980vs}%
  \BibitemOpen
  \bibfield{author}{%
  \bibinfo {author} {\bibfnamefont{M.}~\bibnamefont{Gell-Mann}}, \bibinfo
  {author} {\bibfnamefont{P.}~\bibnamefont{Ramond}},\ and\ \bibinfo {author}
  {\bibfnamefont{R.}~\bibnamefont{Slansky}},\ }%
  \bibfield{journal}{%
  \bibinfo {journal} {Conf. Proc.}\ }%
  \textbf{\bibinfo {volume} {C790927}},\ \bibinfo {pages} {315} (\bibinfo
  {year} {1979}),\ \Eprint{http://arxiv.org/abs/1306.4669}{arXiv:1306.4669
  [hep-th]}%
  \bibAnnoteFile{NoStop}{GellMann:1980vs}%
\bibitem{Mohapatra:1979ia}%
  \BibitemOpen
  \bibfield{author}{%
  \bibinfo {author} {\bibfnamefont{R.~N.}\ \bibnamefont{Mohapatra}}\ and\
  \bibinfo {author} {\bibfnamefont{G.}~\bibnamefont{Senjanovic}},\ }%
  \bibfield{journal}{%
  \Doi{10.1103/PhysRevLett.44.912}{\bibinfo {journal} {Phys.Rev.Lett.}}\ }%
  \textbf{\bibinfo {volume} {44}},\ \bibinfo {pages} {912} (\bibinfo {year}
  {1980})%
  \bibAnnoteFile{NoStop}{Mohapatra:1979ia}%
\bibitem{Achard:2001qv}%
  \BibitemOpen
  \bibfield{author}{%
  \bibinfo {author} {\bibfnamefont{P.}~\bibnamefont{Achard}} \emph{et~al.}
  (\bibinfo {collaboration} {L3}),\ }%
  \bibfield{journal}{%
  \Doi{10.1016/S0370-2693(01)00993-5}{\bibinfo {journal} {Phys. Lett.}}\ }%
  \textbf{\bibinfo {volume} {B517}},\ \bibinfo {pages} {67} (\bibinfo {year}
  {2001}),\ \Eprint{http://arxiv.org/abs/hep-ex/0107014}{arXiv:hep-ex/0107014
  [hep-ex]}%
  \bibAnnoteFile{NoStop}{Achard:2001qv}%
\bibitem{Blondel:2014bra}%
  \BibitemOpen
  \bibfield{author}{%
  \bibinfo {author} {\bibfnamefont{A.}~\bibnamefont{Blondel}}, \bibinfo
  {author} {\bibfnamefont{E.}~\bibnamefont{Graverini}}, \bibinfo {author}
  {\bibfnamefont{N.}~\bibnamefont{Serra}},\ and\ \bibinfo {author}
  {\bibfnamefont{M.}~\bibnamefont{Shaposhnikov}} (\bibinfo {collaboration}
  {FCC-ee study Team}),\ }%
  \bibfield{journal}{%
  \Doi{10.1016/j.nuclphysbps.2015.09.304}{\bibinfo {journal} {Nucl. Part. Phys.
  Proc.}}\ }%
  \textbf{\bibinfo {volume} {273-275}},\ \bibinfo {pages} {1883} (\bibinfo
  {year} {2016}),\ \Eprint{http://arxiv.org/abs/1411.5230}{arXiv:1411.5230
  [hep-ex]}%
  \bibAnnoteFile{NoStop}{Blondel:2014bra}%
\bibitem{Golling:2016gvc}%
  \BibitemOpen
  \bibfield{author}{%
  \bibinfo {author} {\bibfnamefont{T.}~\bibnamefont{Golling}} \emph{et~al.},\
  }%
  \bibfield{journal}{%
  \Doi{10.23731/CYRM-2017-003.441}{\bibinfo {journal} {CERN Yellow Report}},\
  \bibinfo {pages} {441}}%
   (\bibinfo {year} {2017}),\
  \Eprint{http://arxiv.org/abs/1606.00947}{arXiv:1606.00947 [hep-ph]}%
  \bibAnnoteFile{NoStop}{Golling:2016gvc}%
\bibitem{Oh:2016ead}%
  \BibitemOpen
  \bibfield{author}{%
  \bibinfo {author} {\bibfnamefont{S.}~\bibnamefont{Oh}} (\bibinfo
  {collaboration} {CMS}),\ }%
  \bibfield{journal}{%
  \bibinfo {journal} {PoS}\ }%
  \textbf{\bibinfo {volume} {ICHEP2016}},\ \bibinfo {pages} {753} (\bibinfo
  {year} {2016})%
  \bibAnnoteFile{NoStop}{Oh:2016ead}%
\bibitem{Nardi:1991rg}%
  \BibitemOpen
  \bibfield{author}{%
  \bibinfo {author} {\bibfnamefont{E.}~\bibnamefont{Nardi}}, \bibinfo {author}
  {\bibfnamefont{E.}~\bibnamefont{Roulet}},\ and\ \bibinfo {author}
  {\bibfnamefont{D.}~\bibnamefont{Tommasini}},\ }%
  \bibfield{journal}{%
  \Doi{10.1016/0550-3213(92)90566-T}{\bibinfo {journal} {Nucl. Phys.}}\ }%
  \textbf{\bibinfo {volume} {B386}},\ \bibinfo {pages} {239} (\bibinfo {year}
  {1992})%
  \bibAnnoteFile{NoStop}{Nardi:1991rg}%
\bibitem{Nardi:1994iv}%
  \BibitemOpen
  \bibfield{author}{%
  \bibinfo {author} {\bibfnamefont{E.}~\bibnamefont{Nardi}}, \bibinfo {author}
  {\bibfnamefont{E.}~\bibnamefont{Roulet}},\ and\ \bibinfo {author}
  {\bibfnamefont{D.}~\bibnamefont{Tommasini}},\ }%
  \bibfield{journal}{%
  \Doi{10.1016/0370-2693(94)90736-6}{\bibinfo {journal} {Phys. Lett.}}\ }%
  \textbf{\bibinfo {volume} {B327}},\ \bibinfo {pages} {319} (\bibinfo {year}
  {1994}),\ \Eprint{http://arxiv.org/abs/hep-ph/9402224}{arXiv:hep-ph/9402224
  [hep-ph]}%
  \bibAnnoteFile{NoStop}{Nardi:1994iv}%
\bibitem{Kniehl:1994vk}%
  \BibitemOpen
  \bibfield{author}{%
  \bibinfo {author} {\bibfnamefont{B.~A.}\ \bibnamefont{Kniehl}}\ and\ \bibinfo
  {author} {\bibfnamefont{A.}~\bibnamefont{Pilaftsis}},\ }%
  \bibfield{journal}{%
  \Doi{10.1016/0550-3213(94)90086-8}{\bibinfo {journal} {Nucl. Phys.}}\ }%
  \textbf{\bibinfo {volume} {B424}},\ \bibinfo {pages} {18} (\bibinfo {year}
  {1994}),\ \Eprint{http://arxiv.org/abs/hep-ph/9402314}{arXiv:hep-ph/9402314
  [hep-ph]}%
  \bibAnnoteFile{NoStop}{Kniehl:1994vk}%
\bibitem{Illana:2000ic}%
  \BibitemOpen
  \bibfield{author}{%
  \bibinfo {author} {\bibfnamefont{J.~I.}\ \bibnamefont{Illana}}\ and\ \bibinfo
  {author} {\bibfnamefont{T.}~\bibnamefont{Riemann}},\ }%
  \bibfield{journal}{%
  \Doi{10.1103/PhysRevD.63.053004}{\bibinfo {journal} {Phys. Rev.}}\ }%
  \textbf{\bibinfo {volume} {D63}},\ \bibinfo {pages} {053004} (\bibinfo {year}
  {2001}),\ \Eprint{http://arxiv.org/abs/hep-ph/0010193}{arXiv:hep-ph/0010193
  [hep-ph]}%
  \bibAnnoteFile{NoStop}{Illana:2000ic}%
\bibitem{delAguila:2008pw}%
  \BibitemOpen
  \bibfield{author}{%
  \bibinfo {author} {\bibfnamefont{F.}~\bibnamefont{del Aguila}}, \bibinfo
  {author} {\bibfnamefont{J.}~\bibnamefont{de~Blas}},\ and\ \bibinfo {author}
  {\bibfnamefont{M.}~\bibnamefont{Perez-Victoria}},\ }%
  \bibfield{journal}{%
  \Doi{10.1103/PhysRevD.78.013010}{\bibinfo {journal} {Phys.Rev.}}\ }%
  \textbf{\bibinfo {volume} {D78}},\ \bibinfo {pages} {013010} (\bibinfo {year}
  {2008}),\ \Eprint{http://arxiv.org/abs/0803.4008}{arXiv:0803.4008 [hep-ph]}%
  \bibAnnoteFile{NoStop}{delAguila:2008pw}%
\bibitem{Fernandez-Martinez:2016lgt}%
  \BibitemOpen
  \bibfield{author}{%
  \bibinfo {author} {\bibfnamefont{E.}~\bibnamefont{Fernandez-Martinez}},
  \bibinfo {author} {\bibfnamefont{J.}~\bibnamefont{Hernandez-Garcia}},\ and\
  \bibinfo {author} {\bibfnamefont{J.}~\bibnamefont{Lopez-Pavon}},\ }%
  \bibfield{journal}{%
  \Doi{10.1007/JHEP08(2016)033}{\bibinfo {journal} {JHEP}}\ }%
  \textbf{\bibinfo {volume} {08}},\ \bibinfo {pages} {033} (\bibinfo {year}
  {2016}),\ \Eprint{http://arxiv.org/abs/1605.08774}{arXiv:1605.08774
  [hep-ph]}%
  \bibAnnoteFile{NoStop}{Fernandez-Martinez:2016lgt}%
\bibitem{ALEPH:2005ab}%
  \BibitemOpen
  \bibfield{author}{%
  \bibinfo {author} {\bibfnamefont{S.}~\bibnamefont{Schael}} \emph{et~al.}
  (\bibinfo {collaboration} {SLD Electroweak Group, DELPHI, ALEPH, SLD, SLD
  Heavy Flavour Group, OPAL, LEP Electroweak Working Group, L3}),\ }%
  \bibfield{journal}{%
  \Doi{10.1016/j.physrep.2005.12.006}{\bibinfo {journal} {Phys. Rept.}}\ }%
  \textbf{\bibinfo {volume} {427}},\ \bibinfo {pages} {257} (\bibinfo {year}
  {2006}),\ \Eprint{http://arxiv.org/abs/hep-ex/0509008}{arXiv:hep-ex/0509008
  [hep-ex]}%
  \bibAnnoteFile{NoStop}{ALEPH:2005ab}%
\bibitem{Novikov:1999af}%
  \BibitemOpen
  \bibfield{author}{%
  \bibinfo {author} {\bibfnamefont{V.~A.}\ \bibnamefont{Novikov}}, \bibinfo
  {author} {\bibfnamefont{L.~B.}\ \bibnamefont{Okun}}, \bibinfo {author}
  {\bibfnamefont{A.~N.}\ \bibnamefont{Rozanov}},\ and\ \bibinfo {author}
  {\bibfnamefont{M.~I.}\ \bibnamefont{Vysotsky}},\ }%
  \bibfield{journal}{%
  \Doi{10.1088/0034-4885/62/9/201}{\bibinfo {journal} {Rept. Prog. Phys.}}\ }%
  \textbf{\bibinfo {volume} {62}},\ \bibinfo {pages} {1275} (\bibinfo {year}
  {1999}),\ \Eprint{http://arxiv.org/abs/hep-ph/9906465}{arXiv:hep-ph/9906465
  [hep-ph]}%
  \bibAnnoteFile{NoStop}{Novikov:1999af}%
\bibitem{Jarlskog:1990kt}%
  \BibitemOpen
  \bibfield{author}{%
  \bibinfo {author} {\bibfnamefont{C.}~\bibnamefont{Jarlskog}},\ }%
  \bibfield{journal}{%
  \Doi{10.1016/0370-2693(90)91873-A}{\bibinfo {journal} {Phys. Lett.}}\ }%
  \textbf{\bibinfo {volume} {B241}},\ \bibinfo {pages} {579} (\bibinfo {year}
  {1990})%
  \bibAnnoteFile{NoStop}{Jarlskog:1990kt}%
\bibitem{Aguilar:2001ty}%
  \BibitemOpen
  \bibfield{author}{%
  \bibinfo {author} {\bibfnamefont{A.}~\bibnamefont{Aguilar-Arevalo}}
  \emph{et~al.} (\bibinfo {collaboration} {LSND}),\ }%
  \bibfield{journal}{%
  \Doi{10.1103/PhysRevD.64.112007}{\bibinfo {journal} {Phys. Rev.}}\ }%
  \textbf{\bibinfo {volume} {D64}},\ \bibinfo {pages} {112007} (\bibinfo {year}
  {2001}),\ \Eprint{http://arxiv.org/abs/hep-ex/0104049}{arXiv:hep-ex/0104049
  [hep-ex]}%
  \bibAnnoteFile{NoStop}{Aguilar:2001ty}%
\bibitem{Gariazzo:2017fdh}%
  \BibitemOpen
  \bibfield{author}{%
  \bibinfo {author} {\bibfnamefont{S.}~\bibnamefont{Gariazzo}}, \bibinfo
  {author} {\bibfnamefont{C.}~\bibnamefont{Giunti}}, \bibinfo {author}
  {\bibfnamefont{M.}~\bibnamefont{Laveder}},\ and\ \bibinfo {author}
  {\bibfnamefont{Y.~F.}\ \bibnamefont{Li}},\ }%
  \bibfield{journal}{%
  \Doi{10.1007/JHEP06(2017)135}{\bibinfo {journal} {JHEP}}\ }%
  \textbf{\bibinfo {volume} {06}},\ \bibinfo {pages} {135} (\bibinfo {year}
  {2017}),\ \Eprint{http://arxiv.org/abs/1703.00860}{arXiv:1703.00860
  [hep-ph]}%
  \bibAnnoteFile{NoStop}{Gariazzo:2017fdh}%
\bibitem{Aguilar-Arevalo:2018gpe}%
  \BibitemOpen
  \bibfield{author}{%
  \bibinfo {author} {\bibfnamefont{A.~A.}\ \bibnamefont{Aguilar-Arevalo}}
  \emph{et~al.} (\bibinfo {collaboration} {MiniBooNE})}%
   (\bibinfo {year} {2018}),\
  \Eprint{http://arxiv.org/abs/1805.12028}{arXiv:1805.12028 [hep-ex]}%
  \bibAnnoteFile{NoStop}{Aguilar-Arevalo:2018gpe}%
\bibitem{Dentler:2018sju}%
  \BibitemOpen
  \bibfield{author}{%
  \bibinfo {author} {\bibfnamefont{M.}~\bibnamefont{Dentler}}, \bibinfo
  {author} {\bibfnamefont{A.}~\bibnamefont{HernÃ¡ndez-Cabezudo}}, \bibinfo
  {author} {\bibfnamefont{J.}~\bibnamefont{Kopp}}, \bibinfo {author}
  {\bibfnamefont{P.}~\bibnamefont{Machado}}, \bibinfo {author}
  {\bibfnamefont{M.}~\bibnamefont{Maltoni}}, \bibinfo {author}
  {\bibfnamefont{I.}~\bibnamefont{Martinez-Soler}},\ and\ \bibinfo {author}
  {\bibfnamefont{T.}~\bibnamefont{Schwetz}}}%
   (\bibinfo {year} {2018}),\
  \Eprint{http://arxiv.org/abs/1803.10661}{arXiv:1803.10661 [hep-ph]}%
  \bibAnnoteFile{NoStop}{Dentler:2018sju}%
\bibitem{Pontecorvo:1957qd}%
  \BibitemOpen
  \bibfield{author}{%
  \bibinfo {author} {\bibfnamefont{B.}~\bibnamefont{Pontecorvo}},\ }%
  \bibfield{journal}{%
  \bibinfo {journal} {Sov. Phys. JETP}\ }%
  \textbf{\bibinfo {volume} {7}},\ \bibinfo {pages} {172} (\bibinfo {year}
  {1958}),\ \bibinfo {note} {[Zh. Eksp. Teor. Fiz.34,247(1957)]}%
  \bibAnnoteFile{NoStop}{Pontecorvo:1957qd}%
\bibitem{Maki:1962mu}%
  \BibitemOpen
  \bibfield{author}{%
  \bibinfo {author} {\bibfnamefont{Z.}~\bibnamefont{Maki}}, \bibinfo {author}
  {\bibfnamefont{M.}~\bibnamefont{Nakagawa}},\ and\ \bibinfo {author}
  {\bibfnamefont{S.}~\bibnamefont{Sakata}},\ }%
  \bibfield{journal}{%
  \Doi{10.1143/PTP.28.870}{\bibinfo {journal} {Prog. Theor. Phys.}}\ }%
  \textbf{\bibinfo {volume} {28}},\ \bibinfo {pages} {870} (\bibinfo {year}
  {1962})%
  \bibAnnoteFile{NoStop}{Maki:1962mu}%
\bibitem{books/daglib/0019187}%
  \BibitemOpen
  \bibfield{author}{%
  \bibinfo {author} {\bibfnamefont{R.~A.}\ \bibnamefont{Horn}}\ and\ \bibinfo
  {author} {\bibfnamefont{C.~R.}\ \bibnamefont{Johnson}},\ }%
  \emph{\bibinfo {title} {Matrix analysis.}}\ (\bibinfo {publisher} {Cambridge
  University Press},\ \bibinfo {year} {1990})%
  \bibAnnoteFile{NoStop}{books/daglib/0019187}%
\bibitem{Allen06matrixdilations}%
  \BibitemOpen
  \bibinfo {note} {J. C. Allen and D. Arceo, Corporate Author: Space and Naval
  Warfare Systems Center San Diego CA, Matrix Dilations via Cosine-Sine
  Decomposition (2006), Technical rept.,
  http://www.dtic.mil/docs/citations/ADA446226}%
  \bibAnnoteFile{NoStop}{Allen06matrixdilations}%
\bibitem{Halmos:1950}%
  \BibitemOpen
  \bibfield{author}{%
  \bibinfo {author} {\bibfnamefont{P.}~\bibnamefont{Halmos}},\ }%
  \bibfield{journal}{%
  \bibinfo {journal} {Summa Brasil. Math.},\ \bibinfo {pages} {125}}%
   (\bibinfo {year} {1950})%
  \bibAnnoteFile{NoStop}{Halmos:1950}%
\bibitem{Sz.-Nagy:1953}%
  \BibitemOpen
  \bibfield{author}{%
  \bibinfo {author} {\bibfnamefont{B.}~\bibnamefont{Sz-Nagy}},\ }%
  \bibfield{journal}{%
  \bibinfo {journal} {Acta Sci. Math},\ \bibinfo {pages} {87}}%
   (\bibinfo {year} {1953})%
  \bibAnnoteFile{NoStop}{Sz.-Nagy:1953}%
\bibitem{magaril2003convex}%
  \BibitemOpen
  \bibfield{author}{%
  \bibinfo {author} {\bibfnamefont{G.}~\bibnamefont{Magaril-Il'yaev}}\ and\
  \bibinfo {author} {\bibfnamefont{V.}~\bibnamefont{Tikhomirov}},\ }%
  \emph{\bibinfo {title} {Convex Analysis: Theory and Applications}}\ (\bibinfo
  {publisher} {American Mathematical Society},\ \bibinfo {year} {2003})%
  \bibAnnoteFile{NoStop}{magaril2003convex}%
\bibitem{Krantz2014}%
  \BibitemOpen
  \bibfield{author}{%
  \bibinfo {author} {\bibfnamefont{S.~G.}\ \bibnamefont{Krantz}},\ }%
  \emph{\bibinfo {title} {Convex Analysis}}\ (\bibinfo {publisher} {Chapman and
  Hall/CRC},\ \bibinfo {year} {2014})%
  \bibAnnoteFile{NoStop}{Krantz2014}%
\bibitem{Kobayashi:1973fv}%
  \BibitemOpen
  \bibfield{author}{%
  \bibinfo {author} {\bibfnamefont{M.}~\bibnamefont{Kobayashi}}\ and\ \bibinfo
  {author} {\bibfnamefont{T.}~\bibnamefont{Maskawa}},\ }%
  \bibfield{journal}{%
  \Doi{10.1143/PTP.49.652}{\bibinfo {journal} {Prog. Theor. Phys.}}\ }%
  \textbf{\bibinfo {volume} {49}},\ \bibinfo {pages} {652} (\bibinfo {year}
  {1973})%
  \bibAnnoteFile{NoStop}{Kobayashi:1973fv}%
\bibitem{Bilenky:1987ty}%
  \BibitemOpen
  \bibfield{author}{%
  \bibinfo {author} {\bibfnamefont{S.~M.}\ \bibnamefont{Bilenky}}\ and\
  \bibinfo {author} {\bibfnamefont{S.~T.}\ \bibnamefont{Petcov}},\ }%
  \bibfield{journal}{%
  \Doi{10.1103/RevModPhys.59.671}{\bibinfo {journal} {Rev. Mod. Phys.}}\ }%
  \textbf{\bibinfo {volume} {59}},\ \bibinfo {pages} {671} (\bibinfo {year}
  {1987}),\ \bibinfo {note} {[Erratum: Rev. Mod. Phys.60,575(1988)]}%
  \bibAnnoteFile{NoStop}{Bilenky:1987ty}%
\bibitem{Esteban:2016qun}%
  \BibitemOpen
  \bibfield{author}{%
  \bibinfo {author} {\bibfnamefont{I.}~\bibnamefont{Esteban}}, \bibinfo
  {author} {\bibfnamefont{M.~C.}\ \bibnamefont{Gonzalez-Garcia}}, \bibinfo
  {author} {\bibfnamefont{M.}~\bibnamefont{Maltoni}}, \bibinfo {author}
  {\bibfnamefont{I.}~\bibnamefont{Martinez-Soler}},\ and\ \bibinfo {author}
  {\bibfnamefont{T.}~\bibnamefont{Schwetz}},\ }%
  \bibfield{journal}{%
  \Doi{10.1007/JHEP01(2017)087}{\bibinfo {journal} {JHEP}}\ }%
  \textbf{\bibinfo {volume} {01}},\ \bibinfo {pages} {087} (\bibinfo {year}
  {2017}),\ \Eprint{http://arxiv.org/abs/1611.01514}{arXiv:1611.01514}%
  \bibAnnoteFile{NoStop}{Esteban:2016qun}%
\bibitem{nufit}%
  \BibitemOpen
  \bibinfo {note} {NuFIT 3.0 (2016), www.nu-fit.org}%
  \bibAnnoteFile{NoStop}{nufit}%
\bibitem{Smirnov:2013cqa}%
  \BibitemOpen
  \bibfield{author}{%
  \bibinfo {author} {\bibfnamefont{A.~{\relax Yu}.}\ \bibnamefont{Smirnov}},\
  }%
  \bibfield{journal}{%
  \bibinfo {journal} {PoS}\ }%
  \textbf{\bibinfo {volume} {Neutel2013}},\ \bibinfo {pages} {027} (\bibinfo
  {year} {2014}),\ \Eprint{http://arxiv.org/abs/1312.7309}{arXiv:1312.7309
  [hep-ph]}%
  \bibAnnoteFile{NoStop}{Smirnov:2013cqa}%
\bibitem{Olive:2016xmw}%
  \BibitemOpen
  \bibfield{author}{%
  \bibinfo {author} {\bibfnamefont{C.}~\bibnamefont{Patrignani}} \emph{et~al.}
  (\bibinfo {collaboration} {Particle Data Group}),\ }%
  \bibfield{journal}{%
  \Doi{10.1088/1674-1137/40/10/100001}{\bibinfo {journal} {Chin. Phys.}}\ }%
  \textbf{\bibinfo {volume} {C40}},\ \bibinfo {pages} {100001} (\bibinfo {year}
  {2016})%
  \bibAnnoteFile{NoStop}{Olive:2016xmw}%
\bibitem{Capozzi:2016rtj}%
  \BibitemOpen
  \bibfield{author}{%
  \bibinfo {author} {\bibfnamefont{F.}~\bibnamefont{Capozzi}}, \bibinfo
  {author} {\bibfnamefont{E.}~\bibnamefont{Lisi}}, \bibinfo {author}
  {\bibfnamefont{A.}~\bibnamefont{Marrone}}, \bibinfo {author}
  {\bibfnamefont{D.}~\bibnamefont{Montanino}},\ and\ \bibinfo {author}
  {\bibfnamefont{A.}~\bibnamefont{Palazzo}},\ }%
  \bibfield{journal}{%
  \Doi{10.1016/j.nuclphysb.2016.02.016}{\bibinfo {journal} {Nucl. Phys.}}\ }%
  \textbf{\bibinfo {volume} {B908}},\ \bibinfo {pages} {218} (\bibinfo {year}
  {2016}),\ \Eprint{http://arxiv.org/abs/1601.07777}{arXiv:1601.07777
  [hep-ph]}%
  \bibAnnoteFile{NoStop}{Capozzi:2016rtj}%
\bibitem{Forero:2014bxa}%
  \BibitemOpen
  \bibfield{author}{%
  \bibinfo {author} {\bibfnamefont{D.~V.}\ \bibnamefont{Forero}}, \bibinfo
  {author} {\bibfnamefont{M.}~\bibnamefont{Tortola}},\ and\ \bibinfo {author}
  {\bibfnamefont{J.~W.~F.}\ \bibnamefont{Valle}},\ }%
  \bibfield{journal}{%
  \Doi{10.1103/PhysRevD.90.093006}{\bibinfo {journal} {Phys. Rev.}}\ }%
  \textbf{\bibinfo {volume} {D90}},\ \bibinfo {pages} {093006} (\bibinfo {year}
  {2014}),\ \Eprint{http://arxiv.org/abs/1405.7540}{arXiv:1405.7540 [hep-ph]}%
  \bibAnnoteFile{NoStop}{Forero:2014bxa}%
\bibitem{Capozzi:2016vac}%
  \BibitemOpen
  \bibfield{author}{%
  \bibinfo {author} {\bibfnamefont{F.}~\bibnamefont{Capozzi}}, \bibinfo
  {author} {\bibfnamefont{C.}~\bibnamefont{Giunti}}, \bibinfo {author}
  {\bibfnamefont{M.}~\bibnamefont{Laveder}},\ and\ \bibinfo {author}
  {\bibfnamefont{A.}~\bibnamefont{Palazzo}}}%
   (\bibinfo {year} {2016}),\
  \Eprint{http://arxiv.org/abs/1612.07764}{arXiv:1612.07764 [hep-ph]}%
  \bibAnnoteFile{NoStop}{Capozzi:2016vac}%
\bibitem{Jarlskog:1988ii}%
  \BibitemOpen
  \bibfield{author}{%
  \bibinfo {author} {\bibfnamefont{C.}~\bibnamefont{Jarlskog}}\ and\ \bibinfo
  {author} {\bibfnamefont{R.}~\bibnamefont{Stora}},\ }%
  \bibfield{journal}{%
  \Doi{10.1016/0370-2693(88)90428-5}{\bibinfo {journal} {Phys. Lett.}}\ }%
  \textbf{\bibinfo {volume} {B208}},\ \bibinfo {pages} {268} (\bibinfo {year}
  {1988})%
  \bibAnnoteFile{NoStop}{Jarlskog:1988ii}%
\bibitem{Xing:2015wzz}%
  \BibitemOpen
  \bibfield{author}{%
  \bibinfo {author} {\bibfnamefont{Z.-z.}\ \bibnamefont{Xing}}\ and\ \bibinfo
  {author} {\bibfnamefont{J.-y.}\ \bibnamefont{Zhu}},\ }%
  \bibfield{journal}{%
  \Doi{10.1016/j.nuclphysb.2016.03.031}{\bibinfo {journal} {Nucl. Phys.}}\ }%
  \textbf{\bibinfo {volume} {B908}},\ \bibinfo {pages} {302} (\bibinfo {year}
  {2016}),\ \Eprint{http://arxiv.org/abs/1511.00450}{arXiv:1511.00450
  [hep-ph]}%
  \bibAnnoteFile{NoStop}{Xing:2015wzz}%
\bibitem{Parke:2015goa}%
  \BibitemOpen
  \bibfield{author}{%
  \bibinfo {author} {\bibfnamefont{S.}~\bibnamefont{Parke}}\ and\ \bibinfo
  {author} {\bibfnamefont{M.}~\bibnamefont{Ross-Lonergan}},\ }%
  \bibfield{journal}{%
  \Doi{10.1103/PhysRevD.93.113009}{\bibinfo {journal} {Phys. Rev.}}\ }%
  \textbf{\bibinfo {volume} {D93}},\ \bibinfo {pages} {113009} (\bibinfo {year}
  {2016}),\ \Eprint{http://arxiv.org/abs/1508.05095}{arXiv:1508.05095
  [hep-ph]}%
  \bibAnnoteFile{NoStop}{Parke:2015goa}%
\bibitem{Bekman:2002zk}%
  \BibitemOpen
  \bibfield{author}{%
  \bibinfo {author} {\bibfnamefont{B.}~\bibnamefont{Bekman}}, \bibinfo {author}
  {\bibfnamefont{J.}~\bibnamefont{Gluza}}, \bibinfo {author}
  {\bibfnamefont{J.}~\bibnamefont{Holeczek}}, \bibinfo {author}
  {\bibfnamefont{J.}~\bibnamefont{Syska}},\ and\ \bibinfo {author}
  {\bibfnamefont{M.}~\bibnamefont{Zralek}},\ }%
  \bibfield{journal}{%
  \Doi{10.1103/PhysRevD.66.093004}{\bibinfo {journal} {Phys. Rev.}}\ }%
  \textbf{\bibinfo {volume} {D66}},\ \bibinfo {pages} {093004} (\bibinfo {year}
  {2002}),\ \Eprint{http://arxiv.org/abs/hep-ph/0207015}{arXiv:hep-ph/0207015
  [hep-ph]}%
  \bibAnnoteFile{NoStop}{Bekman:2002zk}%
\bibitem{FernandezMartinez:2007ms}%
  \BibitemOpen
  \bibfield{author}{%
  \bibinfo {author} {\bibfnamefont{E.}~\bibnamefont{Fernandez-Martinez}},
  \bibinfo {author} {\bibfnamefont{M.~B.}\ \bibnamefont{Gavela}}, \bibinfo
  {author} {\bibfnamefont{J.}~\bibnamefont{Lopez-Pavon}},\ and\ \bibinfo
  {author} {\bibfnamefont{O.}~\bibnamefont{Yasuda}},\ }%
  \bibfield{journal}{%
  \Doi{10.1016/j.physletb.2007.03.069}{\bibinfo {journal} {Phys. Lett.}}\ }%
  \textbf{\bibinfo {volume} {B649}},\ \bibinfo {pages} {427} (\bibinfo {year}
  {2007}),\ \Eprint{http://arxiv.org/abs/hep-ph/0703098}{arXiv:hep-ph/0703098
  [hep-ph]}%
  \bibAnnoteFile{NoStop}{FernandezMartinez:2007ms}%
\bibitem{Antusch:2006vwa}%
  \BibitemOpen
  \bibfield{author}{%
  \bibinfo {author} {\bibfnamefont{S.}~\bibnamefont{Antusch}}, \bibinfo
  {author} {\bibfnamefont{C.}~\bibnamefont{Biggio}}, \bibinfo {author}
  {\bibfnamefont{E.}~\bibnamefont{Fernandez-Martinez}}, \bibinfo {author}
  {\bibfnamefont{M.~B.}\ \bibnamefont{Gavela}},\ and\ \bibinfo {author}
  {\bibfnamefont{J.}~\bibnamefont{Lopez-Pavon}},\ }%
  \bibfield{journal}{%
  \Doi{10.1088/1126-6708/2006/10/084}{\bibinfo {journal} {JHEP}}\ }%
  \textbf{\bibinfo {volume} {10}},\ \bibinfo {pages} {084} (\bibinfo {year}
  {2006}),\ \Eprint{http://arxiv.org/abs/hep-ph/0607020}{arXiv:hep-ph/0607020
  [hep-ph]}%
  \bibAnnoteFile{NoStop}{Antusch:2006vwa}%
\bibitem{Goswami:2008mi}%
  \BibitemOpen
  \bibfield{author}{%
  \bibinfo {author} {\bibfnamefont{S.}~\bibnamefont{Goswami}}\ and\ \bibinfo
  {author} {\bibfnamefont{T.}~\bibnamefont{Ota}},\ }%
  \bibfield{journal}{%
  \Doi{10.1103/PhysRevD.78.033012}{\bibinfo {journal} {Phys. Rev.}}\ }%
  \textbf{\bibinfo {volume} {D78}},\ \bibinfo {pages} {033012} (\bibinfo {year}
  {2008}),\ \Eprint{http://arxiv.org/abs/0802.1434}{arXiv:0802.1434 [hep-ph]}%
  \bibAnnoteFile{NoStop}{Goswami:2008mi}%
\bibitem{Rodejohann:2009cq}%
  \BibitemOpen
  \bibfield{author}{%
  \bibinfo {author} {\bibfnamefont{W.}~\bibnamefont{Rodejohann}},\ }%
  \bibfield{journal}{%
  \Doi{10.1209/0295-5075/88/51001}{\bibinfo {journal} {Europhys. Lett.}}\ }%
  \textbf{\bibinfo {volume} {88}},\ \bibinfo {pages} {51001} (\bibinfo {year}
  {2009}),\ \Eprint{http://arxiv.org/abs/0903.4590}{arXiv:0903.4590 [hep-ph]}%
  \bibAnnoteFile{NoStop}{Rodejohann:2009cq}%
\bibitem{Antusch:2009pm}%
  \BibitemOpen
  \bibfield{author}{%
  \bibinfo {author} {\bibfnamefont{S.}~\bibnamefont{Antusch}}, \bibinfo
  {author} {\bibfnamefont{M.}~\bibnamefont{Blennow}}, \bibinfo {author}
  {\bibfnamefont{E.}~\bibnamefont{Fernandez-Martinez}},\ and\ \bibinfo {author}
  {\bibfnamefont{J.}~\bibnamefont{Lopez-Pavon}},\ }%
  \bibfield{journal}{%
  \Doi{10.1103/PhysRevD.80.033002}{\bibinfo {journal} {Phys. Rev.}}\ }%
  \textbf{\bibinfo {volume} {D80}},\ \bibinfo {pages} {033002} (\bibinfo {year}
  {2009}),\ \Eprint{http://arxiv.org/abs/0903.3986}{arXiv:0903.3986 [hep-ph]}%
  \bibAnnoteFile{NoStop}{Antusch:2009pm}%
\bibitem{Fong:2016yyh}%
  \BibitemOpen
  \bibfield{author}{%
  \bibinfo {author} {\bibfnamefont{C.~S.}\ \bibnamefont{Fong}}, \bibinfo
  {author} {\bibfnamefont{H.}~\bibnamefont{Minakata}},\ and\ \bibinfo {author}
  {\bibfnamefont{H.}~\bibnamefont{Nunokawa}},\ }%
  \bibfield{journal}{%
  \Doi{10.1007/JHEP02(2017)114}{\bibinfo {journal} {JHEP}}\ }%
  \textbf{\bibinfo {volume} {02}},\ \bibinfo {pages} {114} (\bibinfo {year}
  {2017}),\ \Eprint{http://arxiv.org/abs/1609.08623}{arXiv:1609.08623
  [hep-ph]}%
  \bibAnnoteFile{NoStop}{Fong:2016yyh}%
\bibitem{Blennow:2016jkn}%
  \BibitemOpen
  \bibfield{author}{%
  \bibinfo {author} {\bibfnamefont{M.}~\bibnamefont{Blennow}}, \bibinfo
  {author} {\bibfnamefont{P.}~\bibnamefont{Coloma}}, \bibinfo {author}
  {\bibfnamefont{E.}~\bibnamefont{Fernandez-Martinez}}, \bibinfo {author}
  {\bibfnamefont{J.}~\bibnamefont{Hernandez-Garcia}},\ and\ \bibinfo {author}
  {\bibfnamefont{J.}~\bibnamefont{Lopez-Pavon}},\ }%
  \bibfield{journal}{%
  \Doi{10.1007/JHEP04(2017)153}{\bibinfo {journal} {JHEP}}\ }%
  \textbf{\bibinfo {volume} {04}},\ \bibinfo {pages} {153} (\bibinfo {year}
  {2017}),\ \Eprint{http://arxiv.org/abs/1609.08637}{arXiv:1609.08637
  [hep-ph]}%
  \bibAnnoteFile{NoStop}{Blennow:2016jkn}%
\bibitem{Dutta:2016czj}%
  \BibitemOpen
  \bibfield{author}{%
  \bibinfo {author} {\bibfnamefont{D.}~\bibnamefont{Dutta}}, \bibinfo {author}
  {\bibfnamefont{P.}~\bibnamefont{Ghoshal}},\ and\ \bibinfo {author}
  {\bibfnamefont{S.}~\bibnamefont{Roy}},\ }%
  \bibfield{journal}{%
  \Doi{10.1016/j.nuclphysb.2017.04.018}{\bibinfo {journal} {Nucl. Phys.}}\ }%
  \textbf{\bibinfo {volume} {B920}},\ \bibinfo {pages} {385} (\bibinfo {year}
  {2017}),\ \Eprint{http://arxiv.org/abs/1609.07094}{arXiv:1609.07094
  [hep-ph]}%
  \bibAnnoteFile{NoStop}{Dutta:2016czj}%
\bibitem{Xing:2007zj}%
  \BibitemOpen
  \bibfield{author}{%
  \bibinfo {author} {\bibfnamefont{Z.-z.}\ \bibnamefont{Xing}},\ }%
  \bibfield{journal}{%
  \Doi{10.1016/j.physletb.2008.01.038}{\bibinfo {journal} {Phys. Lett.}}\ }%
  \textbf{\bibinfo {volume} {B660}},\ \bibinfo {pages} {515} (\bibinfo {year}
  {2008}),\ \Eprint{http://arxiv.org/abs/0709.2220}{arXiv:0709.2220 [hep-ph]}%
  \bibAnnoteFile{NoStop}{Xing:2007zj}%
\bibitem{Xing:2011ur}%
  \BibitemOpen
  \bibfield{author}{%
  \bibinfo {author} {\bibfnamefont{Z.-z.}\ \bibnamefont{Xing}},\ }%
  \bibfield{journal}{%
  \Doi{10.1103/PhysRevD.85.013008}{\bibinfo {journal} {Phys. Rev.}}\ }%
  \textbf{\bibinfo {volume} {D85}},\ \bibinfo {pages} {013008} (\bibinfo {year}
  {2012}),\ \Eprint{http://arxiv.org/abs/1110.0083}{arXiv:1110.0083 [hep-ph]}%
  \bibAnnoteFile{NoStop}{Xing:2011ur}%
\bibitem{porwit}%
  \BibitemOpen
  \bibinfo {note} {W.~Flieger, J.~Gluza, K.~Porwit, Neutrino mixing matrices
  with prescribed singular values, work in progress.}%
  \bibAnnoteFile{Stop}{porwit}%
\bibitem{Escrihuela:2015wra}%
  \BibitemOpen
  \bibfield{author}{%
  \bibinfo {author} {\bibfnamefont{F.~J.}\ \bibnamefont{Escrihuela}}, \bibinfo
  {author} {\bibfnamefont{D.~V.}\ \bibnamefont{Forero}}, \bibinfo {author}
  {\bibfnamefont{O.~G.}\ \bibnamefont{Miranda}}, \bibinfo {author}
  {\bibfnamefont{M.}~\bibnamefont{Tortola}},\ and\ \bibinfo {author}
  {\bibfnamefont{J.~W.~F.}\ \bibnamefont{Valle}},\ }%
  \bibfield{journal}{%
  \Doi{10.1103/PhysRevD.93.119905, 10.1103/PhysRevD.92.053009}{\bibinfo
  {journal} {Phys. Rev.}}\ }%
  \textbf{\bibinfo {volume} {D92}},\ \bibinfo {pages} {053009} (\bibinfo {year}
  {2015}),\ \bibinfo {note} {[Erratum: Phys. Rev.D93,no.11,119905(2016)]},\
  \Eprint{http://arxiv.org/abs/1503.08879}{arXiv:1503.08879 [hep-ph]}%
  \bibAnnoteFile{NoStop}{Escrihuela:2015wra}%
\bibitem{Escrihuela:2016ube}%
  \BibitemOpen
  \bibfield{author}{%
  \bibinfo {author} {\bibfnamefont{F.~J.}\ \bibnamefont{Escrihuela}}, \bibinfo
  {author} {\bibfnamefont{D.~V.}\ \bibnamefont{Forero}}, \bibinfo {author}
  {\bibfnamefont{O.~G.}\ \bibnamefont{Miranda}}, \bibinfo {author}
  {\bibfnamefont{M.}~\bibnamefont{Tórtola}},\ and\ \bibinfo {author}
  {\bibfnamefont{J.~W.~F.}\ \bibnamefont{Valle}},\ }%
  \bibfield{journal}{%
  \Doi{10.1088/1367-2630/aa79ec}{\bibinfo {journal} {New J. Phys.}}\ }%
  \textbf{\bibinfo {volume} {19}},\ \bibinfo {pages} {093005} (\bibinfo {year}
  {2017}),\ \Eprint{http://arxiv.org/abs/1612.07377}{arXiv:1612.07377
  [hep-ph]}%
  \bibAnnoteFile{NoStop}{Escrihuela:2016ube}%
\bibitem{Blennow:2011vn}%
  \BibitemOpen
  \bibfield{author}{%
  \bibinfo {author} {\bibfnamefont{M.}~\bibnamefont{Blennow}}\ and\ \bibinfo
  {author} {\bibfnamefont{E.}~\bibnamefont{Fernandez-Martinez}},\ }%
  \bibfield{journal}{%
  \Doi{10.1016/j.physletb.2011.09.028}{\bibinfo {journal} {Phys. Lett.}}\ }%
  \textbf{\bibinfo {volume} {B704}},\ \bibinfo {pages} {223} (\bibinfo {year}
  {2011}),\ \Eprint{http://arxiv.org/abs/1107.3992}{arXiv:1107.3992 [hep-ph]}%
  \bibAnnoteFile{NoStop}{Blennow:2011vn}%
\bibitem{Agostinho:2017wfs}%
  \BibitemOpen
  \bibfield{author}{%
  \bibinfo {author} {\bibfnamefont{N.~R.}\ \bibnamefont{Agostinho}}, \bibinfo
  {author} {\bibfnamefont{G.~C.}\ \bibnamefont{Branco}}, \bibinfo {author}
  {\bibfnamefont{P.~M.~F.}\ \bibnamefont{Pereira}}, \bibinfo {author}
  {\bibfnamefont{M.~N.}\ \bibnamefont{Rebelo}},\ and\ \bibinfo {author}
  {\bibfnamefont{J.~I.}\ \bibnamefont{Silva-Marcos}}}%
   (\bibinfo {year} {2017}),\
  \Eprint{http://arxiv.org/abs/1711.06229}{arXiv:1711.06229 [hep-ph]}%
  \bibAnnoteFile{NoStop}{Agostinho:2017wfs}%
\bibitem{Antusch:2014woa}%
  \BibitemOpen
  \bibfield{author}{%
  \bibinfo {author} {\bibfnamefont{S.}~\bibnamefont{Antusch}}\ and\ \bibinfo
  {author} {\bibfnamefont{O.}~\bibnamefont{Fischer}},\ }%
  \bibfield{journal}{%
  \Doi{10.1007/JHEP10(2014)094}{\bibinfo {journal} {JHEP}}\ }%
  \textbf{\bibinfo {volume} {10}},\ \bibinfo {pages} {094} (\bibinfo {year}
  {2014}),\ \Eprint{http://arxiv.org/abs/1407.6607}{arXiv:1407.6607 [hep-ph]}%
  \bibAnnoteFile{NoStop}{Antusch:2014woa}%
\bibitem{Stoer:1964:CLU:2715875.2716107}%
  \BibitemOpen
  \bibfield{author}{%
  \bibinfo {author} {\bibfnamefont{J.}~\bibnamefont{Stoer}},\ }%
  \bibfield{journal}{%
  \Doi{10.1007/BF01386078}{\bibinfo {journal} {Numer. Math.}}\ }%
  \textbf{\bibinfo {volume} {6}},\ \bibinfo {pages} {302} (\bibinfo {year}
  {1964}),\ \url{http://dx.doi.org/10.1007/BF01386078}%
  \bibAnnoteFile{NoStop}{Stoer:1964:CLU:2715875.2716107}%
\bibitem{eggleston1958}%
  \BibitemOpen
  \bibfield{author}{%
  \bibinfo {author} {\bibfnamefont{H.G.}~\bibnamefont{Eggleston}},\ }%
   \emph{\bibinfo {title} {Convexity.}}\ (\bibinfo {publisher} {Cambridge
  University Press},\ \bibinfo {year} {1958})%
  \bibAnnoteFile{NoStop}{eggleston1958}%
\bibitem{Barany2012}%
  \BibitemOpen
  \bibfield{author}{%
  \bibinfo {author} {\bibfnamefont{I.}~\bibnamefont{B{\'a}r{\'a}ny}}\ and\
  \bibinfo {author} {\bibfnamefont{R.}~\bibnamefont{Karasev}},\ }%
  \bibfield{journal}{%
  \Doi{10.1007/s00454-012-9439-z}{\bibinfo {journal} {Discrete {\&}
  Computational Geometry}}\ }%
  \textbf{\bibinfo {volume} {48}},\ \bibinfo {pages} {783} (\bibinfo {month}
  {Oct}\ \bibinfo {year} {2012}),\ ISSN \bibinfo {issn} {1432-0444},\
  \url{https://doi.org/10.1007/s00454-012-9439-z}%
  \bibAnnoteFile{NoStop}{Bárány2012}%
\bibitem{Weyl1912}%
  \BibitemOpen
  \bibfield{author}{%
  \bibinfo {author} {\bibfnamefont{H.}~\bibnamefont{Weyl}},\ }%
  \bibfield{journal}{%
  \Doi{10.1007/BF01456804}{\bibinfo {journal} {Mathematische Annalen}}\ }%
  \textbf{\bibinfo {volume} {71}},\ \bibinfo {pages} {441} (\bibinfo {year}
  {1912}),\ \url{http://dx.doi.org/10.1007/BF01456804}%
  \bibAnnoteFile{NoStop}{Weyl1912}%
\bibitem{rao}%
  \BibitemOpen
  \bibfield{author}{%
  \bibinfo {author} {\bibfnamefont{C.~R.}\ \bibnamefont{Rao}}\ and\ \bibinfo
  {author} {\bibfnamefont{M.}~\bibnamefont{Rao}},\ }%
  \emph{\bibinfo {title} {Matrix Algebra and Its Application to Statistics and
  Econometrics}}\ (\bibinfo {publisher} {World Scientific},\ \bibinfo {year}
  {2004})%
  \bibAnnoteFile{NoStop}{rao}%
\bibitem{Casas:2001sr}%
  \BibitemOpen
  \bibfield{author}{%
  \bibinfo {author} {\bibfnamefont{J.~A.}\ \bibnamefont{Casas}}\ and\ \bibinfo
  {author} {\bibfnamefont{A.}~\bibnamefont{Ibarra}},\ }%
  \bibfield{journal}{%
  \Doi{10.1016/S0550-3213(01)00475-8}{\bibinfo {journal} {Nucl. Phys.}}\ }%
  \textbf{\bibinfo {volume} {B618}},\ \bibinfo {pages} {171} (\bibinfo {year}
  {2001}),\ \Eprint{http://arxiv.org/abs/hep-ph/0103065}{arXiv:hep-ph/0103065
  [hep-ph]}%
  \bibAnnoteFile{NoStop}{Casas:2001sr}%
\bibitem{Fong:2017gke}%
  \BibitemOpen
  \bibfield{author}{%
  \bibinfo {author} {\bibfnamefont{C.~S.}\ \bibnamefont{Fong}}, \bibinfo
  {author} {\bibfnamefont{H.}~\bibnamefont{Minakata}},\ and\ \bibinfo {author}
  {\bibfnamefont{H.}~\bibnamefont{Nunokawa}}}%
   (\bibinfo {year} {2017}),\
  \Eprint{http://arxiv.org/abs/1712.02798}{arXiv:1712.02798 [hep-ph]}%
  \bibAnnoteFile{NoStop}{Fong:2017gke}%
\bibitem{Donini:2011jh}%
  \BibitemOpen
  \bibfield{author}{%
  \bibinfo {author} {\bibfnamefont{A.}~\bibnamefont{Donini}}, \bibinfo {author}
  {\bibfnamefont{P.}~\bibnamefont{Hernandez}}, \bibinfo {author}
  {\bibfnamefont{J.}~\bibnamefont{Lopez-Pavon}},\ and\ \bibinfo {author}
  {\bibfnamefont{M.}~\bibnamefont{Maltoni}},\ }%
  \bibfield{journal}{%
  \Doi{10.1007/JHEP07(2011)105}{\bibinfo {journal} {JHEP}}\ }%
  \textbf{\bibinfo {volume} {07}},\ \bibinfo {pages} {105} (\bibinfo {year}
  {2011}),\ \Eprint{http://arxiv.org/abs/1106.0064}{arXiv:1106.0064 [hep-ph]}%
  \bibAnnoteFile{NoStop}{Donini:2011jh}%
\bibitem{Mathematica}%
  \BibitemOpen
  \bibinfo {note} {Wolfram Research{,} Inc., Mathematica, Champaign, IL, 2017}%
  \bibAnnoteFile{NoStop}{Mathematica}%
\bibitem{boyd2004convex}%
  \BibitemOpen
  \bibfield{author}{%
  \bibinfo {author} {\bibfnamefont{S.}~\bibnamefont{Boyd}}\ and\ \bibinfo
  {author} {\bibfnamefont{L.}~\bibnamefont{Vandenberghe}},\ }%
  \emph{\bibinfo {title} {Convex optimization}}\ (\bibinfo {publisher}
  {Cambridge {U}niversity {P}ress},\ \bibinfo {year} {2004})%
  \bibAnnoteFile{NoStop}{boyd2004convex}%
\bibitem{Cabibbo:1963yz}%
  \BibitemOpen
  \bibfield{author}{%
  \bibinfo {author} {\bibfnamefont{N.}~\bibnamefont{Cabibbo}},\ }%
  \bibfield{journal}{%
  \Doi{10.1103/PhysRevLett.10.531}{\bibinfo {journal} {Phys. Rev. Lett.}}\ }%
  \textbf{\bibinfo {volume} {10}},\ \bibinfo {pages} {531} (\bibinfo {year}
  {1963})%
  \bibAnnoteFile{NoStop}{Cabibbo:1963yz}%
\bibitem{Wolfenstein:1983yz}%
  \BibitemOpen
  \bibfield{author}{%
  \bibinfo {author} {\bibfnamefont{L.}~\bibnamefont{Wolfenstein}},\ }%
  \bibfield{journal}{%
  \Doi{10.1103/PhysRevLett.51.1945}{\bibinfo {journal} {Phys. Rev. Lett.}}\ }%
  \textbf{\bibinfo {volume} {51}},\ \bibinfo {pages} {1945} (\bibinfo {year}
  {1983})%
  \bibAnnoteFile{NoStop}{Wolfenstein:1983yz}%
\bibitem{Khachatryan:2014nda}%
  \BibitemOpen
  \bibfield{author}{%
  \bibinfo {author} {\bibfnamefont{V.}~\bibnamefont{Khachatryan}} \emph{et~al.}
  (\bibinfo {collaboration} {CMS}),\ }%
  \bibfield{journal}{%
  \Doi{10.1016/j.physletb.2014.06.076}{\bibinfo {journal} {Phys. Lett.}}\ }%
  \textbf{\bibinfo {volume} {B736}},\ \bibinfo {pages} {33} (\bibinfo {year}
  {2014}),\ \Eprint{http://arxiv.org/abs/1404.2292}{arXiv:1404.2292 [hep-ex]}%
  \bibAnnoteFile{NoStop}{Khachatryan:2014nda}%
\bibitem{Sirunyan:2016cdg}%
  \BibitemOpen
  \bibfield{author}{%
  \bibinfo {author} {\bibfnamefont{A.~M.}\ \bibnamefont{Sirunyan}}
  \emph{et~al.} (\bibinfo {collaboration} {CMS}),\ }%
  \bibfield{journal}{%
  \Doi{10.1016/j.physletb.2017.07.047}{\bibinfo {journal} {Phys. Lett.}}\ }%
  \textbf{\bibinfo {volume} {B772}},\ \bibinfo {pages} {752} (\bibinfo {year}
  {2017}),\ \Eprint{http://arxiv.org/abs/1610.00678}{arXiv:1610.00678
  [hep-ex]}%
  \bibAnnoteFile{NoStop}{Sirunyan:2016cdg}%
\bibitem{Aaboud:2017pdi}%
  \BibitemOpen
  \bibfield{author}{%
  \bibinfo {author} {\bibfnamefont{M.}~\bibnamefont{Aaboud}} \emph{et~al.}
  (\bibinfo {collaboration} {ATLAS}),\ }%
  \bibfield{journal}{%
  \Doi{10.1140/epjc/s10052-017-5061-9}{\bibinfo {journal} {Eur. Phys. J.}}\ }%
  \textbf{\bibinfo {volume} {C77}},\ \bibinfo {pages} {531} (\bibinfo {year}
  {2017}),\ \Eprint{http://arxiv.org/abs/1702.02859}{arXiv:1702.02859
  [hep-ex]}%
  \bibAnnoteFile{NoStop}{Aaboud:2017pdi}%
\bibitem{Czakon:2001em}%
  \BibitemOpen
  \bibfield{author}{%
  \bibinfo {author} {\bibfnamefont{M.}~\bibnamefont{Czakon}}, \bibinfo {author}
  {\bibfnamefont{J.}~\bibnamefont{Gluza}},\ and\ \bibinfo {author}
  {\bibfnamefont{M.}~\bibnamefont{Zra{\l}ek}},\ }%
  \bibfield{journal}{%
  \bibinfo {journal} {Acta Phys. Polon.}\ }%
  \textbf{\bibinfo {volume} {B32}},\ \bibinfo {pages} {3735} (\bibinfo {year}
  {2001}),\ \Eprint{http://arxiv.org/abs/hep-ph/0109245}{arXiv:hep-ph/0109245
  [hep-ph]}%
  \bibAnnoteFile{NoStop}{Czakon:2001em}%
\bibitem{Matouek:2006:UUL:1214763}%
  \BibitemOpen
  \bibfield{author}{%
  \bibinfo {author} {\bibfnamefont{J.}~\bibnamefont{Matousek}}\ and\ \bibinfo
  {author} {\bibfnamefont{B.}~\bibnamefont{G\"{a}rtner}},\ }%
  \emph{\bibinfo {title} {Understanding and Using Linear Programming
  (Universitext)}}\ (\bibinfo {publisher} {Springer-Verlag New York, Inc.},\
  \bibinfo {address} {Secaucus, NJ, USA},\ \bibinfo {year} {2006})\ ISBN
  \bibinfo {isbn} {3540306978}%
  \bibAnnoteFile{NoStop}{Matouek:2006:UUL:1214763}%
\bibitem{Thomson:1982}%
  \BibitemOpen
  \bibfield{author}{%
  \bibinfo {author} {\bibfnamefont{R.}~\bibnamefont{Thomson}}\ and\ \bibinfo
  {author} {\bibfnamefont{C.-C.~T.}\ \bibnamefont{Kuo}},\ }%
  \bibfield{journal}{%
  \bibinfo {journal} {Acta Sci. Math.},\ \bibinfo {pages} {345}}%
   (\bibinfo {year} {1982})%
  \bibAnnoteFile{NoStop}{Thomson:1982}%
\bibitem{2017arXiv170400122G}%
  \BibitemOpen
  \bibfield{author}{%
  \bibinfo {author} {\bibfnamefont{P.}~\bibnamefont{{Grover}}}\ and\ \bibinfo
  {author} {\bibfnamefont{P.}~\bibnamefont{{Mishra}}},\ }%
  \bibfield{journal}{%
  \bibinfo {journal} {ArXiv e-prints}}%
   (\bibinfo {month} {Apr.}\ \bibinfo {year} {2017}),\
  \Eprint{http://arxiv.org/abs/1704.00122}{arXiv:1704.00122 [math.FA]}%
  \bibAnnoteFile{NoStop}{2017arXiv170400122G}%
\end{thebibliography}

%

\end{document}